\definecolor{dark-red}{rgb}{0.4,0.15,0.15}
\definecolor{dark-blue}{rgb}{0.15,0.15,0.4}
\definecolor{medium-blue}{rgb}{0,0,0.5}
\definecolor{gray}{rgb}{0.5,0.5,0.5}
\title{Generalized distance domination problems and their complexity on graphs of bounded mim-width}
\author[1]{Lars Jaffke\thanks{Supported by the Bergen Research Foundation (BFS).}}
\author[2]{O-joung Kwon\thanks{Partially supported by the European Research Council (ERC) under the European Union's Horizon 2020 research and innovation programme (ERC consolidator grant DISTRUCT, agreement No. 648527).}}
\author[1]{Torstein J.\,F.\, Str\o{}mme}
\author[1]{Jan Arne Telle}
\affil[1]{Department of Informatics, University of Bergen, Norway. \protect\\ \texttt{\{lars.jaffke, torstein.stromme, jan.arne.telle\}@uib.no}}
\affil[2]{Department of Mathematics, Incheon National University, South Korea. \protect\\ \texttt{ojoungkwon@gmail.com}}
\theoremstyle{plain}
\newtheorem{theorem}{Theorem}
\newtheorem{proposition}[theorem]{Proposition}
\theoremstyle{remark}
\newtheorem{corollary}[theorem]{Corollary}
\newtheorem{observation}[theorem]{Observation}
\newtheorem{claim}{Claim}[section]
\newtheorem{remarknr}[claim]{Remark}
\newtheorem{subclaim}{Claim}[theorem]
\newtheorem{nestedobservation}[subclaim]{Observation}
\theoremstyle{definition}
\newtheorem{definition}[theorem]{Definition}
\newcommand{\dist}{\textsc{dist}}
\newcommand{\Wone}{\text{$\W[1]$}}
\DeclareMathOperator{\operatorClassFPT}{FPT}
\newcommand{\classFPT}{\ensuremath{\operatorClassFPT}}
\DeclareMathOperator{\operatorClassNP}{NP}
\newcommand{\classNP}{\ensuremath{\operatorClassNP}}
\newcommand{\LCVPshort}{\textsc{LCVP}\xspace}
\newcommand{\eg}{e.\,g.\@\xspace}
\newcommand*{\defeq}{\mathrel{\vcenter{\baselineskip0.5ex \lineskiplimit0pt
                     \hbox{\scriptsize.}\hbox{\scriptsize.}}}%
                     =}
\newenvironment{claimproof}{\begin{proof}\renewcommand{\qedsymbol}{\claimqed}}{\end{proof}\renewcommand{\qedsymbol}{\plainqed}}
\let\plainqed\qedsymbol
\newcommand\card[1]{\left| #1 \right|}
\newcommand\cardd[1]{\left|\left| #1 \right|\right|}
\newcommand\cM{\mathcal{M}}
\newcommand\cO{\mathcal{O}}
\newcommand\cS{\mathcal{S}}
\newcommand\cV{\mathcal{V}}
\newcommand\bN{\mathbb{N}}
\newcommand\fB{\mathfrak{B}}
\newcommand\fC{\mathfrak{C}}
\newcommand\maxprob{\textsc{Max}}
\newcommand\minprob{\textsc{Min}}
\begin{document}

\maketitle

\begin{abstract}
	We generalize the family of $(\sigma, \rho)$-problems and locally checkable vertex partition problems to their distance versions, which naturally captures well-known problems such as distance-$r$ dominating set and distance-$r$ independent set. We show that these distance problems are \XP{} parameterized by the structural parameter {\em mim-width}, and hence polynomial on graph classes where mim-width is bounded and quickly computable, such as $k$-trapezoid graphs, Dilworth $k$-graphs, (circular) permutation graphs, interval graphs and their complements, convex graphs and their complements, $k$-polygon graphs, circular arc graphs, complements of $d$-degenerate graphs, and $H$-graphs if given an $H$-representation. To supplement these findings, we show that many classes of (distance) $(\sigma, \rho)$-problems are \Wone{}-hard parameterized by mim-width + solution size.
\end{abstract}

\section{Introduction}
Telle and Proskurowski~\cite{telle1997algorithms} defined the $(\sigma,\rho)$-domination problems, and the more general \emph{locally checkable vertex partitioning} problems (\LCVPshort{}). In $(\sigma,\rho)$-domination problems, feasible solutions are vertex sets with constraints on how many neighbours each vertex of the graph has in the set.
The framework generalizes important and well-studied problems such as \textsc{Maximum Independent Set} and \textsc{Minimum Dominating Set},
as well as \textsc{Perfect Code}, \textsc{Minimum subgraph with minimum degree $d$} and a multitude of other problems. See Table \ref{tab1}. Bui-Xuan, Telle and Vatshelle~\cite{BUIXUAN201366} showed that $(\sigma, \rho)$-domination and locally checkable vertex partitioning problems can be solved in time \XP{} parameterized by \emph{mim-width}, if we are given a corresponding decomposition tree. Roughly speaking, the structural parameter mim-width measures how easy it is to decompose a graph along vertex cuts inducing a bipartite graph with small maximum induced matching size~\cite{vatshelle2012new}.

In this paper, we consider distance versions of problems related to independence and domination, like \textsc{Distance-$r$ Independent Set} and \textsc{Distance-$r$ Dominating Set}. 
The \textsc{Distance-$r$ Independent Set} problem, also studied under the names \textsc{$r$-Scattered Set} and \textsc{$r$-Dispersion} (see e.g.~\cite{BMT17} and the references therein), asks to find a set of at least $k$ vertices whose vertices have pairwise distance strictly longer than $r$. Agnarsson et al.~\cite{AGNARSSON2003} pointed out that it is identical to the original \textsc{Independent Set} problem on the $r$-th power graph $G^{r}$ of the input graph $G$, and also showed that for fixed $r$, it can be solved in linear time for interval graphs, and circular arc graphs.
The \textsc{Distance-$r$ Dominating Set} problem was introduced by Slater~\cite{SLATER1976} and Henning et al.~\cite{HENNING1991}. They also discussed that 
it is identical to solve the original \textsc{Dominating Set} problem on the $r$-th power graph. Slater presented a linear-time algorithm to solve \textsc{Distance-$r$ Dominating Set} problem on forests.

We generalize all of the $(\sigma,\rho)$-domination and \LCVPshort{} problems to their distance versions, which naturally captures  \textsc{Distance-$r$ Independent Set} and \textsc{Distance-$r$ Dominating Set}. Where the original problems put constraints on the size of the immediate neighborhood of a vertex, we consider the constraints to be applied to the ball of radius $r$ around it. Consider for instance the \textsc{Minimum Subgraph with Minimum Degree $d$} problem;
where the original problem is asking for the smallest (number of vertices) subgraph of minimum degree $d$, we are instead looking for the smallest subgraph such that for each vertex there are at least $d$ vertices at distance at least $1$ and at most $r$.
In the \textsc{Perfect Code} problem, the target is to choose a subset of vertices such that each vertex has exactly one chosen vertex in its closed neighbourhood. In the distance-$r$ version of the problem, we replace the closed neighbourhood by the closed $r$-neighbourhood. This problem is known as \textsc{Perfect $r$-Code}, and was introduced by Biggs~\cite{BIGGS1973289} in 1973. Similarily, for every problem in Table~\ref{tab1} its distance-$r$ generalization either introduces a new problem or is already well-known.

We show that all these distance problems are \XP{} parameterized by mim-width if a decomposition tree is given. The main result of the paper is of structural nature, namely that for any positive integer $r$ the mim-width of a graph power $G^r$ is at most twice the mim-width of $G$. It follows that we can reduce the distance-$r$ version of a $(\sigma,\rho)$-domination problem to its non-distance variant by taking the graph power $G^r$, whilst preserving small mim-width.

The downside to showing results using the parameter mim-width, is that we do not know an \XP{} algorithm computing mim-width. 
Computing a decomposition tree with optimal mim-width is \classNP{}-complete in general and \Wone{}-hard parameterized by itself. Determining the optimal mim-width is not in $\APX$ unless $\NP = \ZPP$, making it unlikely to have a polynomial-time constant-factor approximation algorithm~\cite{DBLP:journals/tcs/SaetherV16}, but saying nothing about an \XP{} algorithm.

However, for several graph classes we are able to find a decomposition tree of constant mim-width in polynomial time, using the results of Belmonte and Vatshelle~\cite{BELMONTE201354}. These include; permutation graphs, convex graphs and their complements, interval graphs and their complements (all of which have {\em linear} mim-width 1); (circular $k$-) trapezoid graphs, circular permutation graphs, Dilworth-$k$ graphs, $k$-polygon graphs, circular arc graphs and complements of $d$-degenerate graphs. Fomin, Golovach and Raymond~\cite{FGR17} show that we can find linear decomposition trees of constant mim-width for the very general class of $H$-graphs, see Definition~\ref{def:hgraph}, in polynomial time, {\em given}\footnote{We would like to remark that it is $\NP$-complete to decide whether a graph is an $H$-graph whenever $H$ is not a cactus~\cite{CTVZ17}.} an $H$-representation of the input graph. For all of the above graph classes, our results imply that the distance-$r$ $(\sigma, \rho)$-domination and \LCVPshort{} problems become polynomial time solvable.

Graphs represented by intersections of objects in some model are often closed under taking powers.
For instance, interval graphs, and generally $d$-trapezoid graphs~\cite{FLOTOW1995,AGNARSSON2003}, circular arc graphs~\cite{RAYCHAUD1992,AGNARSSON2003},  and leaf power graphs (by definition) are such graphs. 
We refer to \cite[Chapter 10.6]{BRANDSTADT1999} for a survey of such results.
For these classes, we already know that the distance-$r$ version of a $(\sigma,\rho)$-domination problem can be solved in polynomial time.
However, this closure property does not always hold; for instance, permutation graphs are not closed under taking powers.
Our result provides that to obtain such algorithmic results, we do not need to know that these classes are closed under taking powers; it is sufficient to know that classes have bounded mim-width. To the best of our knowledge, for the most well-studied distance-$r$ $(\sigma, \rho)$-domination problem, \textsc{Distance-$r$ Dominating Set}, we obtain the first polynomial time algorithms on Dilworth $k$-graphs, convex graphs and their complements, complements of interval graphs, $k$-polygon graphs, $H$-graphs (given an $H$-representation of the input graph), and complements of $d$-degenerate graphs.

The natural question to ask after obtaining an \XP{} algorithm, is whether we can do better, \eg can we show that for all fixed $r$, the distance-$r$ $(\sigma, \rho)$-domination problems are in \FPT{}. Fomin et al.~\cite{FGR17} answered this in the negative by showing that (the standard, i.e. distance-$1$ variants of) \textsc{Maximum Independent Set}, \textsc{Minimum Dominating Set} and \textsc{Minimum Independent Dominating Set} problems are \Wone{}-hard parameterized by (linear) mim-width + solution size. 
We modify their reductions to extend these results to several families of $(\sigma, \rho)$-domination problems, including the maximization variants of {\sc Induced Matching}, {\sc Induced $d$-Regular Subgraph} and {\sc Induced Subgraph of Max Degree $\le d$}, the minimization variants of {\sc Total Dominating Set} and {\sc $d$-Dominating Set} and both the maximization and the minimization variant of {\sc Dominating Induced Matching}.

The remainder of the paper is organized as follows. In Section~\ref{sec:intro-sigmarho} we introduce the $(\sigma, \rho)$ problems and define their distance-$r$ generalization. In Section~\ref{sec:intro-mimw} we introduce mim-width, and state previously known results. In Section~\ref{sec:mimwpower} we show that the mim-width of a graph grows by at most a factor 2 when taking (arbitrary large) powers and give algorithmic consequences. We discuss LCVP problems, their distance-$r$ versions and algorithmic consequences regarding them in Section~\ref{sec:lcvp} and in
Section~\ref{sec:lowerbounds} we present the above mentioned lower bounds. Finally, we give some concluding remarks in Section~\ref{sec:conclusion}. 
Some notational conventions are given in the appendix.

\section{Distance-$r$ $(\sigma, \rho)$-Domination Problems}\label{sec:intro-sigmarho}
Let $\sigma$ and $\rho$ be finite or co-finite subsets of the natural numbers $\sigma, \rho \subseteq \bN$. For a graph $G$, a vertex set $S\subseteq V(G)$ is a $(\sigma, \rho)$-dominator if
\begin{itemize}
    \item for each vertex $v\in S$ it holds that $|N(v) \cap S| \in \sigma$, and
    \item for each vertex $v\in V(G)\setminus S$ it holds that $|N(v) \cap S| \in \rho$.
\end{itemize}
\noindent For instance, a $(\{0\}, \mathbb{N})$-set is an independent set as there are
no edges inside of the set, and we do not care about adjacencies between
$S$ and $V(G)\setminus S$.
For another example, a $(\mathbb{N}, \mathbb{N}^+)$-set is a dominating
set as we require that for each vertex in $V(G)\setminus S$, it has at
least one neighbor in $S$.

There are 3 types of $(\sigma, \rho)$-domination problems; minimization, maximization and existence. We denote the problem of finding a minimum $(\sigma, \rho)$-dominator as the \textsc{Min}-$(\sigma, \rho)$ problem. Similarily, \textsc{Max}-$(\sigma, \rho)$ denotes the maximization problem, and $\exists\text{-}(\sigma, \rho)$ denotes the existence problem. Many well-studied problems fit into this framework, see Table \ref{tab1} for examples.

\begin{table}[tb]
    \centering
    \begin{tabular}{|l|l|l|l|}   \hline
    $\sigma$ & $\rho$ & $d$ & Standard name \\
    \hline \hline

    $\{0\}$ & $\mathbb{N}$ & 1 & Independent set $*$ \\
    $\mathbb{N}$ & $\mathbb{N}^+$ & 1 & Dominating set $**$ \\
    $\{0\}$ & $\mathbb{N}^+$ &  1 & Maximal Independent set $**$  \\
    $\mathbb{N}^+$ & $ \mathbb{N}^+$ &  1 & Total Dominating set $\star \star$  \\

    $\{0\}$ & $\{0,1\}$ & 2 &Strong Stable set or 2-Packing \\
    $\{0\}$ & $\{1\}$ &  2 &Perfect Code or Efficient Dom. set  \\
    $\{0,1\}$ & $\{0,1\}$ & 2 & Total Nearly Perfect set \\
    $\{0,1\}$ & $\{1\}$ &  2 & Weakly Perfect Dominating set \\
    $\{1\}$ & $\{1\}$ & 2 & Total Perfect Dominating set \\
    $ \{1\}$ & $\mathbb{N}$ & 2 & Induced Matching $\star$ \\
    $\{1\}$ & $\mathbb{N}^+$ & 2 & Dominating Induced Matching $\star$, $\star \star$ \\
    $\mathbb{N}$ & $\{1\}$ & 2 & Perfect Dominating set  \\

    $\mathbb{N}$ &
    $\{d,\hspace{-0.05cm}d\hspace{-0.01cm}+\hspace{-0.01cm}1,\hspace{-0.05cm}...\hspace{-0.05cm}\}$
    & $d$&  $d$-Dominating set $\star \star$  \\

    $\{d\}$ & $\mathbb{N}$ &
    $\hspace{-0.01cm}d\hspace{-0.01cm}+\hspace{-0.01cm}1$ &
    Induced $d$-Regular Subgraph $\star$  \\

    $\{d,\hspace{-0.05cm}d\hspace{-0.01cm}+\hspace{-0.01cm}1,\hspace{-0.05cm}...\hspace{-0.05cm}\}$
    & $\mathbb{N}$ & $d$ &  Subgraph of Min Degree $\geq d$  \\

    $\{0,1,\hspace{-0.05cm}...,d\}\hspace{-0.01cm}$ & $\mathbb{N}$ &
    $d\hspace{-0.01cm}+\hspace{-0.01cm}1$ &
    Induced Subg. of Max Degree $\leq d$  $\star$ \\

    \hline
    \end{tabular}
    \caption{Some vertex subset properties expressible as $(\sigma,
    \rho)$-sets, with $\mathbb{N}=\{0,1,...\}$ and $\mathbb{N}^+=\{1,2,...\}$.
    Column $d$ shows $d=\max( d(\sigma), d(\rho))$. For each problem, at least one of the minimization, the maximization and the existence problem is NP-complete. For problems marked with $\star$ (resp., $\star \star$), \Wone{}-hardness of the maximization (resp., minimization) problem parameterized by mim-width + solution size is shown in the present paper. For problems marked with $*$ (resp., $* *$) the \Wone{}-hardness of maximization (resp., minimization) in the same parameterization was shown by Fomin et al.~\cite{FGR17}.}
    \label{tab1}
\end{table}
The \emph{$d$-value} of a distance-$r$ $(\sigma, \rho)$ problem is a constant which will ultimately affect the runtime of the algorithm. For a set $\mu \subseteq \bN$, the value $d(\mu)$ should be understood as the highest value in $\bN$ we need to enumerate in order to describe $\mu$. Hence, if $\mu$ is finite, it is simply the maximum value in $\mu$, and if $\mu$ is co-finite, it is the maximum natural number \emph{not} in $\mu$ ($1$ is added for technical reasons).
\begin{definition}[{$d$-value}] \label{def:dvalue}
  Let $d(\bN) = 0$. For every non-empty finite or co-finite set $\mu \subseteq \bN$, let $d(\mu) = 1+\min(\max\{x \mid x \in \mu\}, \max\{x \mid x \in \bN \setminus \mu\})$.
\end{definition}
For a given distance-$r$ $(\sigma, \rho)$ problem $\Pi_{\sigma, \rho}$, its $d$-value is defined as $d(\Pi_{\sigma, \rho}) \defeq \max\{d(\sigma), d(\rho)\}$, see column $d$ in Table~\ref{tab1}.

\section{Mim-width and Applications}\label{sec:intro-mimw}
\emph{Maximum induced matching width}, or \emph{mim-width} for short, was introduced in the Ph.\,D.\, thesis of Vatshelle~\cite{vatshelle2012new}, used implicitly by Belmonte and Vatshelle~\cite{BELMONTE201354}, and is a structural graph parameter described over \emph{decomposition trees} (sometimes called \emph{branch decompositions}), similar to graph parameters such as \emph{rank-width} and \emph{module-width}. Decomposition trees naturally appear in divide- and conquer -style algorithms where one recursively partitions the pieces of a problem into two parts. When the algorithm is at the point where it combines solutions of its subproblems to form a full solution, the structure of the cuts are (unsurprisingly) important to the runtime; this is especially true of dynamic programming when one needs to store multiple sub-solutions at each intermediate node. We will briefly introduce the necessary machinery here, but for a more comprehensive introduction we refer the reader to~\cite{vatshelle2012new}.

A graph of maximum degree at most $3$ is called {\em subcubic}.
A {\em decomposition tree} for a graph $G$ is a pair $(T, \delta)$ where $T$ is a subcubic tree and $\delta : V(G) \to L(T)$ is a bijection between the vertices of $G$ and the leaves of $T$. Each edge $e\in E(T)$ naturally splits the leaves of the tree in two groups depending on their connected component when $e$ is removed. In this way, each edge $e\in E(T)$ also represent a partition of $V(G)$ into two partition classes $A_e$ and $\overline{A}_e$.
One way to measure the cut structure is by the \emph{maximum induced matching} across a cut of $(T, \delta)$. A set of edges $M$ is called an {\em induced matching} if no pair of edges in $M$ shares an endpoint and if the subgraph induced by the endpoints of $M$ does not contain any additional edges.
\begin{definition}[mim-width]
    Let $G$ be a graph, and let $(T, \delta)$ be a decomposition tree for $G$. For each edge $e \in E(T)$ and corresponding partition of the vertices $A_e, \overline{A}_e$, we let $cutmim_G(A_e, \overline{A}_e)$ denote the size of a maximum induced matching of the bipartite graph on the edges crossing the cut. Let the {\em mim-width of the decomposition tree} be
    $$mimw_G(T, \delta) = \max_{e \in E(T)}\{cutmim(A_e, \overline{A}_e)\}$$
    The {\em mim-width of the graph $G$}, denoted $mimw(G)$, is the minimum value of $mimw_G(T, \delta)$ over all possible decompositions trees $(T, \delta)$. The {\em linear mim-width of the graph $G$} is the minimum value of $mimw_G(T, \delta)$ over all possible decompositions trees $(T, \delta)$ where $T$ is a caterpillar.
\end{definition}

In previous work, Bui-Xuan et al.~\cite{BUIXUAN201366} and Belmonte and Vatshelle~\cite{BELMONTE201354} showed that all $(\sigma, \rho)$ problems can be solved in time $n^{\cO(w)}$ where $w$ denotes the mim-width of a decomposition tree that is provided as part of the input. More precisely, they show the following.\footnote{We would like to remark that the original results in~\cite{BUIXUAN201366} are stated in terms of the number of $d$-neighborhood equivalence classes across the cuts in the decomposition tree ($nec_d(T, \delta)$) giving a runtime of $n^4 \cdot nec_d(T, \delta)^c$ (where $c = 2$ if the given decomposition is a caterpillar and $c = 3$ otherwise). In~\cite[Lemma 2]{BELMONTE201354}, Belmonte and Vatshelle show that $nec_d(T, \delta) \le n^{d \cdot mimw_G(T, \delta)}$.}

\begin{proposition}[{\cite{BELMONTE201354,BUIXUAN201366}}] \label{prop:sigma:rho-algorithm-nondistance-minmw}
	There is an algorithm that given a graph $G$ and a decomposition tree $(T, \delta)$ of $G$ with $w \defeq mimw_G(T, \delta)$ solves each $(\sigma, \rho)$ problem $\Pi$ with $d \defeq d(\Pi)$
		\begin{enumerate}[(i)]
			\item in time $\cO(n^{4 + 2d \cdot w})$, if $T$ is a caterpillar, and
			\item in time $\cO(n^{4 + 3d \cdot w})$, otherwise.
		\end{enumerate}
\end{proposition}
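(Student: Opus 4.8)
The plan is a bottom-up dynamic programming along the decomposition tree $(T,\delta)$, in which the partial solutions on one side $A_e$ of an edge $e$ are grouped into a small number of classes according to how they ``look'' to the other side $\overline{A}_e$. The first thing I would establish is the key consequence of $\sigma$ and $\rho$ being finite or co-finite: whether a vertex $v$ satisfies its constraint is determined by $\min\{d, |N(v)\cap S|\}$, since once the number of selected neighbours reaches $d = d(\Pi)$, membership of this count in $\sigma$ (for $v\in S$) or in $\rho$ (for $v\notin S$) is already decided. This justifies the $d$-neighbourhood equivalence: two sets $S,S'\subseteq A_e$ are equivalent when $\min\{d,|N(u)\cap S|\} = \min\{d,|N(u)\cap S'|\}$ for every $u\in\overline{A}_e$, and symmetrically for subsets of $\overline{A}_e$. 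Write $nec_d(A_e,\overline{A}_e)$ for the number of classes and $nec_d(T,\delta)$ for the maximum over all cuts and both directions.

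I would then index the DP table at a node inducing the cut $(A,\overline{A})$ by a pair: (a)~the class of the current partial solution $S\subseteq A$ as seen from $\overline{A}$, and (b)~a class describing the hypothetical contribution that the final solution restricted to $\overline{A}$ will make to $A$. The entry stores the optimum size of a set $S\subseteq A$ lying in class~(a) such that every vertex of $A$ already meets its $(\sigma,\rho)$ constraint once the counts prescribed by~(b) are added (for the existence variant one stores only a feasibility flag). Leaf tables are filled directly; at an internal node whose children induce $A_1,A_2$ with $A=A_1\cup A_2$, I combine pairs of compatible child entries, updating the capped neighbour counts of the boundary vertices and taking the optimum; at the root $\overline{A}=\emptyset$, a single class survives and the answer is read off. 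Correctness holds because the capped counts carry exactly the information the constraints test, and every global solution projects onto a compatible chain of entries.

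For the running time I would count representatives. Computing a system of representatives for each cut and locating the class of a given set costs an $\cO(n^4)$ polynomial overhead across the $\cO(n)$ nodes. A merge ranges over one class for each of $A_1$, $A_2$ and $\overline{A}$, the remaining indices being determined, which gives a factor $nec_d(T,\delta)^3$ in general. When $T$ is a caterpillar one child side is a single vertex and contributes only a constant number of classes, so the merge ranges over just two free classes and the factor drops to $nec_d(T,\delta)^2$. This reproduces the Bui-Xuan et al.\ running times of $\cO(n^4\cdot nec_d(T,\delta)^2)$ for caterpillars and $\cO(n^4\cdot nec_d(T,\delta)^3)$ otherwise.

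The main obstacle, and the point at which mim-width becomes the governing parameter, is to bound the number of equivalence classes. Here I would invoke the Belmonte--Vatshelle inequality $nec_d(T,\delta)\le n^{d\cdot mimw_G(T,\delta)} = n^{dw}$ recorded in the footnote above: an induced matching of size at most $w$ across a cut limits how many distinct capped neighbourhood-count profiles the opposite side can realise, and the exponent $d$ accounts for counts being tracked up to $d$. Substituting $nec_d(T,\delta)\le n^{dw}$ into the two running times yields $\cO(n^{4+2dw})$ for caterpillars and $\cO(n^{4+3dw})$ in general, as claimed.
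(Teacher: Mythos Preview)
Your proposal is correct and follows exactly the approach the paper takes (via its footnote): the dynamic program of Bui-Xuan, Telle and Vatshelle indexed by $d$-neighbourhood equivalence classes gives runtime $\cO(n^4 \cdot nec_d(T,\delta)^c)$ with $c=2$ for caterpillars and $c=3$ otherwise, and substituting the Belmonte--Vatshelle bound $nec_d(T,\delta) \le n^{d \cdot w}$ yields the stated runtimes. The paper itself does not reprove the proposition but simply cites these two ingredients, so your sketch of the underlying argument is more detailed than what the paper provides.
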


\section{Mim-width on Graph Powers}\label{sec:mimwpower}
\begin{definition}[Graph power]
  Let $G = (V, E)$ be a graph. Then the $k$-th power of $G$, denoted $G^k$, is a graph on the same vertex set where there is an edge between two vertices if and only if the distance between them is at most $k$ in $G$. Formally, $V(G^k) = V(G)$ and $E(G^k) = \{uv \mid \dist_G(u, v) \leq k \}$.
\end{definition}
\begin{theorem}\label{thm:mimwpower}
  For any graph $G$ and positive integer $k$, $\text{mimw}(G^k) \leq 2 \cdot \text{mimw}(G)$.
\end{theorem}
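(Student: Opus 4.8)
The plan is to reuse the decomposition tree. Since $G$ and $G^k$ have the same vertex set, any decomposition tree $(T,\delta)$ for $G$ is also one for $G^k$, so it suffices to prove the cut-local bound $\mathrm{cutmim}_{G^k}(A,\overline{A}) \le 2\cdot \mathrm{cutmim}_G(A,\overline{A})$ for every partition $(A,\overline{A})$ arising from an edge of $T$; taking the maximum over edges and then the minimum over trees then yields the theorem. Fix such a cut and let $\{a_ib_i\}_{i=1}^m$ be a maximum induced matching of the bipartite graph of $G^k$-edges crossing it, with $a_i\in A$, $b_i\in\overline{A}$. By definition $\dist_G(a_i,b_i)\le k$ while $\dist_G(a_i,b_j)\ge k+1$ for all $i\ne j$. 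For each $i$ I would fix a shortest $a_i$–$b_i$ path $P_i$ in $G$, of some length $\ell_i\le k$; since it starts in $A$ and ends in $\overline{A}$ it must cross the cut.

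The key distance estimate is the triangle inequality: for any vertices $x$ on $P_i$ and $y$ on $P_j$ with $x\in A$, $y\in\overline{A}$ we have $\dist_G(x,y)\ge \dist_G(a_i,b_j)-\dist_G(a_i,x)-\dist_G(y,b_j)\ge (k+1)-\dist_G(a_i,x)-\dist_G(y,b_j)$, so $xy$ is a non-edge of $G$ as soon as $\dist_G(a_i,x)+\dist_G(y,b_j)\le k-1$. A clean consequence is obtained by splitting the index set according to the side of the cut containing a midpoint $m_i$ of $P_i$ (the vertex at distance $\lfloor \ell_i/2\rfloor$ from $a_i$). If $m_i\in A$ then $\dist_G(m_i,b_i)=\lceil \ell_i/2\rceil\le\lceil k/2\rceil$, while $\dist_G(m_i,b_j)\ge (k+1)-\lfloor k/2\rfloor>\lceil k/2\rceil$ for $j\ne i$, so the pairs $(m_i,b_i)$ form an induced matching of $G^{\lceil k/2\rceil}$ across the cut; symmetrically, the indices with $m_i\in\overline{A}$ yield an induced matching of $G^{\lfloor k/2\rfloor}$. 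This gives $\mathrm{cutmim}_{G^k}(A,\overline{A})\le \mathrm{cutmim}_{G^{\lceil k/2\rceil}}(A,\overline{A})+\mathrm{cutmim}_{G^{\lfloor k/2\rfloor}}(A,\overline{A})$ for the fixed tree, and hence, by unrolling, the weaker bound $\mathrm{mimw}(G^k)\le 2^{\lceil \log_2 k\rceil}\,\mathrm{mimw}(G)$.

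The step I expect to be the main obstacle is sharpening this from the recursive factor $2^{\lceil\log k\rceil}$ down to the constant $2$, i.e.\ collapsing directly to induced matchings of $G$ itself rather than of a lower power. The difficulty is intrinsic: the two endpoints of a single crossing edge of $P_i$ are adjacent, so for a path of length close to $k$ one cannot keep both $\dist_G(a_i,x_i)$ and $\dist_G(y_i,b_i)$ small at once, and a midpoint split controls only one of these two distances per path. Consequently the non-edge test $\dist_G(a_i,x_i)+\dist_G(y_j,b_j)\le k-1$ can fail for pairs drawn from the same class, and the induced ``conflict'' relation among crossing edges (where $x_iy_j$ is an edge, which forces $\dist_G(a_i,x_i)+\dist_G(y_j,b_j)\ge k$) is governed by the ordering of these terminal distances and need not be bipartite for an arbitrary selection of crossing edges. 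The crux is therefore to pin down a canonical crossing edge per path—pushing one endpoint onto the crossing nearest the dominant terminal so that the subtracted distances telescope and the estimate $\dist_G(x_i,y_j)\ge 2$ becomes automatic within each midpoint class—and to verify that this canonical choice makes each of the two classes a genuine induced matching of $G$, so that each class has size at most $\mathrm{cutmim}_G(A,\overline{A})$ and the total is at most twice this. Confirming that the canonical choice survives the worst case of near-length-$k$ paths, where the two terminal distances are most tightly coupled, is where I expect the real work to lie.
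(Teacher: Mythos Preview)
Your proposal is incomplete: you only establish the recursive bound that unrolls to a factor $2^{\lceil\log_2 k\rceil}\approx k$, and you correctly flag the jump to the constant $2$ as the real obstacle. The missing key idea is a \emph{parity} observation, not a canonical choice of crossing edge. First, one proves that the shortest paths $P_i$ are pairwise vertex-disjoint (if $y\in P_i\cap P_j$ then $\dist_G(a_i,y)+\dist_G(y,b_i)\le k$ and $\dist_G(a_j,y)+\dist_G(y,b_j)\le k$; adding and regrouping gives $\dist_G(a_i,b_j)+\dist_G(a_j,b_i)\le 2k$, contradicting that both are $\ge k+1$). Then for each $i$ one picks \emph{any} crossing edge $u_i'v_i'$ on $P_i$ with $u_i'\in A$, $v_i'\in\overline{A}$, and analyses when two such edges conflict, i.e.\ when $u_i'v_j'\in E(G)$ for $i\ne j$. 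Writing $u_d=\dist_G(a_i,u_i')$, $v_d=\dist_G(b_i,v_i')$, $w_d=\dist_G(a_j,u_j')$, $x_d=\dist_G(b_j,v_j')$, the constraints $\dist_G(a_i,b_j)\ge k+1$ and $\dist_G(a_j,b_i)\ge k+1$ together with $|P_i|,|P_j|\le k$ force equalities $u_d+x_d=k$, $u_d+v_d=w_d+x_d=k-1$, and hence $w_d=u_d-1$. Thus a conflict can only occur between paths whose values $\dist_G(a_i,u_i')$ have \emph{opposite parity}. Splitting the indices by the parity of $\dist_G(a_i,u_i')$ therefore yields two classes, each of which is a genuine induced matching of $G$ across the cut; the larger one has size at least $|M|/2$.

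Your instinct that the conflict relation is governed by the terminal distances is right, but the conclusion you drew from it is off: the relation \emph{is} bipartite for an arbitrary choice of crossing edges, and the $2$-colouring is given by parity, so there is no need to search for a canonical crossing edge ``near the dominant terminal''. The midpoint split you used for the recursive bound is the wrong invariant to track; the correct invariant is the parity of the $A$-side terminal distance.
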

\begin{proof}
  Assume that there is a decomposition tree of mim-width $w$ for the graph $G$. We show that the same decomposition tree has mim-width at most $2w$ for $G^k$.
  
  We consider a cut $A, \overline{A}$ of the decomposition tree. Let $M$ be a maximum induced matching across the cut for $G^k$. To prove our claim, it suffices to construct an induced matching across the cut $M'$ in $G$ such that $|M'| \geq \frac{|M|}{2}$. 
  
  We begin by noticing that for an edge $uv \in M$, the distance between $u$ and $v$ is at most $k$ in $G$. For each such edge $uv \in M$, we let $P_{uv}$ denote some shortest path between $u$ and $v$ in $G$ (including the endpoints $u$ and $v$).
  \begin{subclaim} \label{claim:disjointpaths}
    Let $uv, wx \in M$ be two distinct edges of the matching. Then $P_{uv}$ and $P_{wx}$ are vertex disjoint.
  \end{subclaim}
  \begin{claimproof}
    We may assume that $u, w \in A$ and $v, x \in \overline{A}$. 
    Now assume for the sake of contradiction there exists a vertex $y \in P_{uv} \cap P_{wx}$. Because both paths have length at most $k$, we have that $\dist_G(u,y) + \dist_G(y, v) \leq k$, and $\dist_G(w,y) + \dist_G(y, x) \leq k$. Adding these together, we get
    $$\dist_G(u,y) + \dist_G(y, v) + \dist_G(w,y) + \dist_G(y, x) \leq 2k.$$
    \noindent Since $uv$ and $wx$ are both in $M$, there can not exist edges $ux$ and $wv$ in $G^k$. Hence, their distance in $G$ is strictly greater than $k$, i.e.~$\dist_G(u,y) + \dist_G(y, x) \ge \dist_G(u, x) > k$, and $\dist_G(w,y) + \dist_G(y, v) > k$. Putting these together, we obtian our contradiction:
    $$\dist_G(u,y) + \dist_G(y, x) + \dist_G(w,y) + \dist_G(y, v) > 2k$$
    This concludes the proof of the claim.
  \end{claimproof}
  Our next observation is that for each $uv \in M$, the path $P_{uv}$ starts (without loss of generality) in $A$, and ends in $\overline{A}$. There must hence exist at least one point at which the path cross from $A$ to $\overline{A}$. For each $uv \in M$, we can thus safely let $u'v' \in E(P_{uv})$ denote an edge in $G$ such that $u' \in A$ and $v' \in \overline{A}$.

  We plan to construct our matching $M'$ by picking a subset of such edges. However, we can not simply take all of them, since some pairs may be incompatible in the sense that they will not form an induced matching across the cut $A, \overline{A}$. We examine the structures that arise when two such edges $u'v'$ and $w'x'$ are incompatible, and can not both be included in the same induced matching across the cut. For easier readability, we let $\alpha_d$ be a shorthand notation for $\dist_G(\alpha, \alpha')$ for $\alpha \in \{u, v, w, x\}$.
  \begin{figure}
    \centering

\begin{tikzpicture}[scale=.8]
  \usetikzlibrary{decorations.pathmorphing}
  \tikzset{snake it/.style={decorate, decoration={snake, segment length=4, amplitude=1}}}
  
  \begin{scope}[every node/.style={circle, draw, minimum width=.9cm}]
    \node (u) at (0, 0) {$u$};
    \node (v) at (0, -2) {$v$};
    \node (up) at (3, 0) {$u'$};
    \node (vp) at (3, -2) {$v'$};
  
    \node (w) at (9, 0) {$w$};
    \node (x) at (9, -2) {$x$};
    \node (wp) at (6, 0) {$w'$};
    \node (xp) at (6, -2) {$x'$};
  \end{scope}
  
  \node at (-1.8, -0.5) {$A$};
  \node at (-1.8, -1.5) {$\overline{A}$};
  \draw[dotted] (-2, -1) -- (10, -1);
  
  \draw[dashed](u) to (v);
  \path[-](u) edge [snake it, bend left=10] node[fill=white] {$u_d$} (up);
  \path[-](v) edge [snake it, bend right=10] node[fill=white] {$v_d$} (vp);
  \draw (up) to (vp);

  \draw[dashed](w) to (x);
  \path[-] (w) edge [snake it, bend right=10] node[fill=white] {$w_d$} (wp);
  \path[-] (x) edge [snake it, bend left=10] node[fill=white] {$x_d$} (xp);
  \draw (wp) to (xp);
  
  \draw (up) to (xp);

\end{tikzpicture}
    \caption{Structure of two paths $P_{uv}$ and $P_{wx}$ when the edge $u'x'$ exists in $G$. Dashed edges appear in $G^k$, solid edges appear in $G$, squiggly lines are (shortest) paths existing in $G$ (possibly of length 0, and possibly crossing back and forth across the cut).}  \label{fig:neighbourpaths}
  \end{figure}
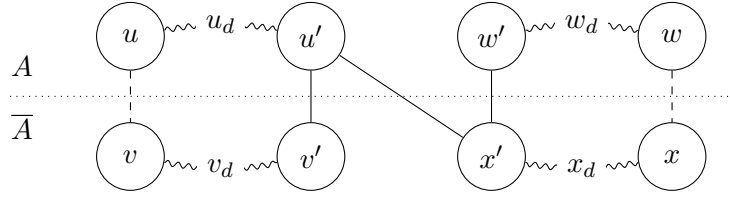  
  \begin{subclaim} \label{claim:neighbourpaths}
    Let $uv, wx \in M$ be two distinct edges of $M$ and let $u'v'$ and $w'x'$ be edges on the shortest paths as defined above. 
    If there is an edge $u'x' \in E(G)$, then all of the following hold. See Figure~\ref{fig:neighbourpaths}.
    \begin{enumerate}[(a)]
      \item $u_d + x_d = k$
      \item $u_d + v_d = w_d + x_d = k - 1$
      \item $w_d = u_d - 1$
    \end{enumerate}
  \end{subclaim}
  \begin{claimproof}
%
  \textbf{(a)} Since $ux$ is not an edge in $G^k$, the distance between $u$ and $x$ must be at least $k+1$ in $G$, and so $u_d + x_d$ must be at least $k$. It remains to show that $u_d + x_d \leq k$ for equality to hold. Similarily to the proof of Claim~\ref{claim:disjointpaths}, we know that $P_{uv}$ and $P_{wx}$ both are of length at most $k$. We get
  \begin{equation}\label{eq:sum_all_upperbound}
  u_d + v_d + w_d + x_d \leq 2k - 2
  \end{equation}
  \noindent The $-2$ at the end is because we do not include the length contributed by edges $u'v'$ and $w'x'$ in our sum. Now assume for the sake of contradiction that $u_d + x_d \geq k+1$. Then we get that
  $$v_d + w_d \leq 2k - 2 - k - 1 = k-3$$
  \noindent Because $\dist_G(v', w') \leq 3$ (follow the edges $u'v'\to u'x' \to w'x'$), this implies that $\dist_G(v, w) \leq k$, and the edge $vw$ would hence exist in $G^k$. This contradicts that $uv$ and $wx$ were both in the same induced matching $M$.
  
  \textbf{(b)} Assume for the sake of contradiction that $u_d + v_d \leq k - 2$. Then, rather than Equation~\ref{eq:sum_all_upperbound}, we get the following bound
  $$u_d + v_d + w_d + x_d \leq 2k - 3$$
  \noindent By (a) we know that $u_d + x_d = k$, so by a similar argument as above we get that $v_d + x_d \leq k - 3$, obtaining a contradiction. An anolgous argument holds for $w_d + x_d$.
  
  \textbf{(c)} This follows immidiately by substituting (a) into (b).
  \end{claimproof}
  We will now construct our induced matching $M'$. We construct two candidates for $M'$, and we will pick the biggest one. First, we construct $M'_0$ by including $u'v'$ for each edge $uv \in M$ where $\dist_G(u, u')$ is even. Symetrically, $M'_1$ is constructed by including $u'v'$ if $\dist_G(u, u')$ is odd. Clearly, at least one of $M'_0, M'_1$ contains $\geq \frac{|M|}{2}$ egdes. It remains to show that $M'$ indeed forms an induced matching across the cut $A, \overline{A}$ in $G$.
  
  Consider two distinct edges $u'v'$ and $w'x'$ from $M'$. By Claim~\ref{claim:disjointpaths}, the two edges are vertex disjoint. If there is an edge violating that $u'v'$ and $w'x'$ are both in the same induced matching, it must be either $u'x'$ or $v'w'$. Without loss of generality we may assume it is an edge of the type $u'x'$. By Claim~\ref{claim:neighbourpaths} (c), we then have that the parities of $\dist_G(u, u')$ and $\dist_G(w, w')$ are different. But by how $M'$ was constructed, this is not possible. This concludes the proof.
\end{proof}

\begin{observation} \label{obs:distancenbhood_equiv_powergraph}
  For a positive integer $r$, a graph $G$ and a vertex $u\in V(G)$, the $r$-neighbourhood of $u$ is equal to the neighbourhood of $u$ in $G^r$, i.e.~$N^r_G(u) = N_{G^r}(u)$.
\end{observation}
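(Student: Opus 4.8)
The plan is to prove this by directly unfolding the two definitions involved and checking that they describe the same vertex set; there is no real combinatorial content here, only a matching of notation. First I would recall that, by the definition of the graph power, a vertex $v$ is adjacent to $u$ in $G^r$ precisely when $uv \in E(G^r)$, which by the defining equation $E(G^r) = \{uv \mid \dist_G(u,v) \le r\}$ holds if and only if $v \neq u$ and $\dist_G(u,v) \le r$. Hence $N_{G^r}(u) = \{v \in V(G) \mid v \neq u \text{ and } \dist_G(u,v) \le r\}$.

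Next I would recall the definition of the (open) $r$-neighbourhood from the notational conventions: $N^r_G(u)$ is exactly the set of vertices within distance $r$ of $u$, excluding $u$ itself, that is $N^r_G(u) = \{v \in V(G) \mid v \neq u \text{ and } \dist_G(u,v) \le r\}$. The proof then concludes by observing that the two displayed sets are literally identical, so $N^r_G(u) = N_{G^r}(u)$, and this holds for every $u \in V(G)$ and every positive integer $r$.

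The only point requiring the slightest care is the treatment of $u$ itself: the defining set in the power graph admits pairs at distance $0$, but since $G^r$ is a simple graph it contains no loop at $u$, so $u \notin N_{G^r}(u)$, matching the exclusion of $u$ from the open neighbourhood $N^r_G(u)$. As both notions use the same exclusion, the equality is exact rather than off by the single vertex $u$. I expect no genuine obstacle in this argument; the ``hard part'', such as it is, is merely to state the two set definitions cleanly so that their coincidence is manifest, and this step is what Theorem~\ref{thm:mimwpower} will be combined with to transfer distance-$r$ constraints into ordinary $(\sigma,\rho)$-constraints on the power graph.
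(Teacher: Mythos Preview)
Your proposal is correct; the paper itself states this as an observation without proof, treating it as an immediate consequence of the definitions. Your unfolding of $N^r_G(u)$ and $N_{G^r}(u)$ matches the paper's conventions exactly, including the exclusion of $u$ itself.
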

\noindent The observation above shows that solving a distance-$r$ $(\sigma, \rho)$ problem on $G$ is the same as solving the same standard distance-1 variation of the problem on $G^r$. Hence, we may reduce our problem to the standard version by simply computing the graph power. Combining Theorem~\ref{thm:mimwpower} with the algorithms provided in Proposition~\ref{prop:sigma:rho-algorithm-nondistance-minmw}, we have the following consequence.
\begin{corollary}\label{cor:dist:r:mim}
	There is an algorithm that for all $r \in \bN$, given a graph $G$ and a decomposition tree $(T, \delta)$ of $G$ with $w \defeq mimw_G(T, \delta)$ solves each distance-$r$ $(\sigma, \rho)$ problem $\Pi$ with $d \defeq d(\Pi)$
	\begin{enumerate}[(i)]
		\item in time $\cO(n^{4 + 4d\cdot w})$, if $T$ is a caterpillar, and\label{cor:dist:r:mim:lin}
		\item in time $\cO(n^{4 + 6d\cdot w})$, otherwise.\label{cor:dist:r:mim:general}
	\end{enumerate}
\end{corollary}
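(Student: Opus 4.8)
The plan is to reduce, in polynomial time, the distance-$r$ version of $\Pi$ on $G$ to the standard (distance-$1$) version of $\Pi$ on the power graph $G^r$, and then to invoke Proposition~\ref{prop:sigma:rho-algorithm-nondistance-minmw} using the \emph{same} decomposition tree $(T, \delta)$, whose mim-width with respect to $G^r$ is controlled by Theorem~\ref{thm:mimwpower}. Concretely, the three cited results chain together: Observation~\ref{obs:distancenbhood_equiv_powergraph} makes the reduction correct, Theorem~\ref{thm:mimwpower} keeps the width small under the reduction, and Proposition~\ref{prop:sigma:rho-algorithm-nondistance-minmw} solves the resulting standard instance.

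First I would argue soundness of the reduction. By Observation~\ref{obs:distancenbhood_equiv_powergraph} we have $N^r_G(u) = N_{G^r}(u)$ for every vertex $u$, so a set $S \subseteq V(G)$ satisfies the distance-$r$ $(\sigma, \rho)$ constraints in $G$ precisely when it satisfies the ordinary $(\sigma, \rho)$ constraints in $G^r$. Since $\sigma$ and $\rho$ are unchanged by the reduction, the $d$-value is preserved: the value $d(\Pi)$ for the distance-$r$ problem equals the $d$-value of the corresponding standard problem on $G^r$. The power graph $G^r$ itself can be produced in polynomial time, for instance by running a breadth-first search from each vertex to obtain all pairwise distances and then adding an edge for every pair at distance at most $r$; this costs $\cO(n(n+m)) = \cO(n^3)$, which is dominated by the runtimes claimed in the statement (each of which is $\Omega(n^4)$), so it does not affect the stated bounds.

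Next I would transfer the decomposition tree. Since $V(G^r) = V(G)$, the given pair $(T, \delta)$ is also a decomposition tree for $G^r$, with the same underlying tree $T$ and the same bijection $\delta$ onto the leaves. The crucial point is that the proof of Theorem~\ref{thm:mimwpower} does not merely bound $mimw(G^r)$ in terms of $mimw(G)$; it establishes that this one fixed tree satisfies $mimw_{G^r}(T, \delta) \le 2 \cdot mimw_G(T, \delta) = 2w$. I would make this per-tree reading explicit, since it is exactly what is needed here: we are handed a particular tree and must bound \emph{its} width on the power graph, not merely the optimum over all trees. Moreover, because $T$ is unchanged, $T$ is a caterpillar for $G^r$ if and only if it was a caterpillar for $G$, so the two cases of the statement align with the two cases of Proposition~\ref{prop:sigma:rho-algorithm-nondistance-minmw}.

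Finally I would apply Proposition~\ref{prop:sigma:rho-algorithm-nondistance-minmw} to the instance consisting of $G^r$ together with $(T, \delta)$, whose mim-width is at most $2w$ and whose $d$-value is $d$. Substituting $2w$ for the width parameter yields running time $\cO(n^{4 + 2d \cdot (2w)}) = \cO(n^{4 + 4d \cdot w})$ when $T$ is a caterpillar and $\cO(n^{4 + 3d \cdot (2w)}) = \cO(n^{4 + 6d \cdot w})$ otherwise, as required, with the cost of computing $G^r$ absorbed into these bounds. The single point requiring care — and the step I would treat as the main obstacle — is precisely this per-tree strengthening of Theorem~\ref{thm:mimwpower}; everything else is a routine composition of the three cited results.
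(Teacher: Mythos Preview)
Your proposal is correct and follows essentially the same approach as the paper: compute $G^r$ in $\cO(n^3)$ time, reuse the given decomposition tree $(T,\delta)$ on $G^r$, invoke the per-tree bound from the proof of Theorem~\ref{thm:mimwpower} to get $mimw_{G^r}(T,\delta)\le 2w$, and then apply Proposition~\ref{prop:sigma:rho-algorithm-nondistance-minmw} with correctness coming from Observation~\ref{obs:distancenbhood_equiv_powergraph}. Your explicit emphasis that Theorem~\ref{thm:mimwpower} is proved cut-by-cut for a fixed tree (and hence applies to the given $(T,\delta)$, preserving the caterpillar case) is exactly the point the paper relies on.
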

\begin{proof}
    Let $G$ be the input graph and $(T, \delta)$ the provided decomposition tree. We apply the following algorithm:
    \begin{description}
        \item[Step 1.] Compute the graph $G^r$.
        \item[Step 2.] Solve the standard (distance-$1$) version of the problem on $G^r$, providing $(T, \delta)$ as the decomposition tree.
        \item[Step 3.] Return the answer of the algorithm ran in Step 2 without modification.
    \end{description}
    Computing $G^r$ in Step 1 takes at most $\cO(n^3)$ time using standard methods, Step 3 takes constant time. The worst time complexity is hence found in Step 2. By Theorem~\ref{thm:mimwpower}, the mim-width of $(T, \delta)$ on $G^r$ is at most twice that of the same decomposition on $G$. The stated runtime then follows from Proposition~\ref{prop:sigma:rho-algorithm-nondistance-minmw}.
	The correctness of this procedure follows immediately from Observation~\ref{obs:distancenbhood_equiv_powergraph}.
\end{proof}

\section{\LCVPshort{} Problems}\label{sec:lcvp}
A generalization of $(\sigma, \rho)$ problems are the \emph{locally checkable vertex partitioning} (\LCVPshort{}) problems. A {\em degree constraint matrix} $D$ is a $q\times q$ matrix where each entry is a finite or co-finite subset of $\bN$. For a graph $G$ and a partition of its vertices $\cV = \{V_1, V_2, \ldots V_q\}$, we say that it is a $D$-partition if and only if, for each $i, j \in [q]$ and each vertex $v \in V_i$, it holds that $|N(v) \cap V_j| \in D[i,j]$. Empty partition classes are allowed.

For instance, if a graph can be partitioned according to the $3 \times 3$ matrix whose diagonal entries are $\{0\}$ and the non-diagonal ones are $\bN$, then the graph is $3$-colorable.
Typically, the natural algorithmic questions associated with \LCVPshort{} properties are existential.\footnote{Note however that each $(\sigma, \rho)$ problem can be stated as an \LCVPshort{} problem via the matrix $D_{(\sigma,\rho)} = \begin{bmatrix} \sigma & \bN \\ \rho & \bN \end{bmatrix}$, so maximization or minimization of some block of the partition can be natural as well.
}
Interesting problems which can be phrased in such terms include the \textsc{$H$-Covering} and \textsc{Graph $H$-Homomorphism} problems where $H$ is fixed, as well as \textsc{$q$-coloring}, \textsc{Perfect Matching Cut} and more. We refer to~\cite{telle1997algorithms} for an overview.

We generalize \LCVPshort{} properties to their distance-$r$ version, by considering the ball of radius $r$ around each vertex rather than just the immediate neighbourhood.

\begin{definition}[{Distance-$r$ neighbourhood constraint matrix}] \label{def:dist-nbrh-constraint-matrix}
    A distance-$r$ neighbourhood constraint matrix $D$ is a $q\times q$ matrix where each entry is a finite or co-finite subset of $\bN$. For a graph $G$ and a partition of its vertices $\cV = \{V_1, V_2, \ldots V_q\}$, we say that it is a $D$-distance-$r$-partition if and only if, for each $i, j \in [q]$ and each vertex $v \in V_i$, it holds that $|N^r(v) \cap V_j| \in D[i,j]$. Empty partition classes are allowed.
\end{definition}

We say that an algorithmic problem is a \emph{distance-$r$ \LCVPshort{} problem} if the property in question can be described by a distance-$r$ neighbourhood constraint matrix. For example, the distance-$r$ version of a problem such as \textsc{$q$-coloring} can be interpreted as an assignment of at most $q$ colours to vertices of a graph such that no two vertices are assigned the same colour if they are at distance $r$ or closer.

For a given distance-$r$ \LCVPshort{} problem $\Pi$, its $d$-value $d(\Pi)$ is the maximum $d$-value over all the sets in the corresponding neighbourhood constraint matrix.

As in the case of $(\sigma, \rho)$ problems, combining Theorem~\ref{thm:mimwpower} with Observation~\ref{obs:distancenbhood_equiv_powergraph} and the works~\cite{BELMONTE201354,BUIXUAN201366} we have the following result.

\begin{corollary}
	There is an algorithm that for all $r \in \bN$, given a graph $G$ and a decomposition tree $(T, \delta)$ of $G$ with $w \defeq mimw_G(T, \delta)$ solves each distance-$r$ LCVP problem $\Pi$ with $d \defeq d(\Pi)$
	\begin{enumerate}[(i)]
		\item in time $\cO(n^{4 + 4qd \cdot w})$, if $T$ is a caterpillar, and
		\item in time $\cO(n^{4 + 6qd \cdot w})$, otherwise.
	\end{enumerate}
\end{corollary}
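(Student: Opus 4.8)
The plan is to mirror exactly the structure of the proof of Corollary~\ref{cor:dist:r:mim} for $(\sigma, \rho)$ problems, since the distance-$r$ \LCVPshort{} case reduces to the standard \LCVPshort{} case by the same power-graph trick. First I would invoke the $q\times q$ analogue of Proposition~\ref{prop:sigma:rho-algorithm-nondistance-minmw}: the works of Bui-Xuan et al.~\cite{BUIXUAN201366} and Belmonte and Vatshelle~\cite{BELMONTE201354} give an algorithm solving any standard (distance-$1$) \LCVPshort{} problem with $q$ partition classes in time $\cO(n^{4 + cqd\cdot w})$ on a decomposition tree of mim-width $w$, where $c = 2$ if $T$ is a caterpillar and $c = 3$ otherwise, and $d$ is the $d$-value of the associated neighbourhood constraint matrix. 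The extra factor of $q$ compared to the $(\sigma, \rho)$ setting comes from the number of partition classes (each of the $q$ blocks contributes a $d$-neighbourhood equivalence bound of $n^{d\cdot w}$, so the total number of states at a cut is bounded by $n^{qd\cdot w}$).

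Next I would describe the three-step reduction identical to the one in Corollary~\ref{cor:dist:r:mim}: compute the graph power $G^r$ in $\cO(n^3)$ time by standard methods; solve the standard distance-$1$ \LCVPshort{} problem on $G^r$, supplying the same decomposition tree $(T, \delta)$; and return the answer unchanged. Correctness follows directly from Observation~\ref{obs:distancenbhood_equiv_powergraph}, since $N^r_G(v) = N_{G^r}(v)$ for every vertex $v$, so a $D$-distance-$r$-partition of $G$ is precisely a (standard) $D$-partition of $G^r$ and vice versa.

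For the running time, the key structural input is Theorem~\ref{thm:mimwpower}: the mim-width of $(T, \delta)$ on $G^r$ is at most $2w$. Substituting $2w$ for $w$ into the standard \LCVPshort{} bound $\cO(n^{4 + cqd\cdot w})$ turns the exponent $cqd\cdot w$ into $2cqd\cdot w$, yielding $\cO(n^{4 + 4qd\cdot w})$ in the caterpillar case ($c=2$) and $\cO(n^{4 + 6qd\cdot w})$ otherwise ($c=3$); the $\cO(n^3)$ cost of Step~1 is dominated by Step~2 in both cases. I do not anticipate a genuine obstacle here, as every ingredient is already in place; the only point requiring mild care is confirming that the $d$-value of the distance-$r$ matrix (defined as the maximum $d$-value over all its entries) is the right quantity to feed into the cited \LCVPshort{} bound, and that the preservation of the $q\times q$ matrix under the power-graph reduction keeps $q$ and $d$ unchanged so that the only alteration to the exponent is the factor-$2$ blow-up in mim-width.
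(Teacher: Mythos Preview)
Your proposal is correct and matches the paper's approach exactly. The paper does not give an explicit proof of this corollary but simply states that it follows by combining Theorem~\ref{thm:mimwpower}, Observation~\ref{obs:distancenbhood_equiv_powergraph}, and the \LCVPshort{} algorithms of~\cite{BELMONTE201354,BUIXUAN201366}; your three-step reduction (compute $G^r$, run the standard \LCVPshort{} algorithm on $G^r$ with the same decomposition tree, return the answer) is precisely the intended argument, mirroring the proof of Corollary~\ref{cor:dist:r:mim}.
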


\section{Lower Bounds}\label{sec:lowerbounds}
We show that several $(\sigma, \rho)$-problems are \Wone{}-hard parameterized by linear mim-width plus solution size.
Our reductions are based on two recent reductions due to Fomin, Golovach and Raymond~\cite{FGR17} who showed that {\sc Independent Set} and {\sc Dominating Set} are \Wone{}-hard parameterized by linear mim-width plus solution size. In fact they show hardness for the above mentioned problems on {\em $H$-graphs} (the parameter being the number of edges in $H$ plus solution size) which we now define formally.

\begin{definition}[{$H$-Graph}]\label{def:hgraph} Let $X$ be a set and $\cS$ a family of subsets of $X$. The {\em intersection graph} of $\cS$ is a graph with vertex set $\cS$ such that $S, T \in \cS$ are adjacent if and only if $S \cap T \neq \emptyset$.
Let $H$ be a (multi-) graph. We say that $G$ is an {\em $H$-graph} if there is a subdivision $H'$ of $H$ and a family of subsets $\cM \defeq \{M_v\}_{v \in V(G)}$ (called an {\em $H$-representation}) of $V(H')$ where $H'[M_v]$ is connected for all $v \in V(G)$, such that $G$ is isomorphic to the intersection graph of $\cM$.
\end{definition}

All of the hardness results presented in this section are obtained via reductions to the respective problems on $H$-graphs, and the hardness for linear mim-width follows from the following proposition.

\begin{proposition}[Theorem 1 in~\cite{FGR17}]\label{thm:h:graph:mim}
	Let $G$ be an $H$-graph. Then, $G$ has linear mim-width at most $2\cdot \cardd{H}$ and a corresponding decomposition tree can be computed in polynomial time given an $H$-representation of $G$.
\end{proposition}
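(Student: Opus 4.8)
The plan is to build a \emph{linear} decomposition, \ie a caterpillar $(T,\delta)$, and to show that its mim-width is at most $2\cardd{H}$. Since a caterpillar is just a linear order of $V(G)$, it suffices to produce an ordering $v_1 \prec v_2 \prec \cdots \prec v_n$ of $V(G)$ and to bound $cutmim_G(A,\overline{A})$ for every \emph{prefix} cut $A = \{v_1,\dots,v_j\}$. Writing $\cM = \{M_v\}_{v \in V(G)}$ for the given $H$-representation, recall that $H'$ is a subdivision of $H$, so $V(H')$ splits into the branch vertices (the copies of $V(H)$) and, for each edge $e \in E(H)$, the subdivision vertices lying on a path $P_e$ whose only attachment points to the rest of $H'$ are its two endpoints. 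Each $M_v$ induces a connected subgraph of $H'$, and $uv \in E(G)$ iff $M_u \cap M_v \neq \emptyset$.

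First I would fix a traversal order of $H'$. Concretely, take a DFS of $H$ and record a linear order $\pi$ of $V(H')$ in which the subdivision vertices of each edge-path $P_e$ appear as a contiguous block, oriented from the earlier to the later endpoint of $e$. For each $v$ I then set $\ell(v)$ to be the $\pi$-smallest element of $M_v$ and order $V(G)$ by increasing $\ell(v)$, breaking ties arbitrarily; this is the ordering $\prec$, clearly computable in polynomial time from $\cM$ and $H'$. The caterpillar $(T,\delta)$ realizing $\prec$ and the translation back to the stated bound are routine.

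The heart of the argument is the per-cut bound. Fix a prefix cut $(A,\overline{A})$ and an induced matching $\{a_1b_1,\dots,a_tb_t\}$ of the bipartite graph of crossing edges, with each $a_i \in A$ and $b_i \in \overline{A}$; thus $M_{a_i}\cap M_{b_i}\neq\emptyset$ while $M_{a_i}\cap M_{b_j}=\emptyset$ for $i \neq j$. For each $i$ choose a witness $p_i \in M_{a_i}\cap M_{b_i}\subseteq V(H')$ and \emph{charge} the pair $i$ to the edge $e(i) \in E(H)$ whose path $P_{e(i)}$ contains $p_i$ (assigning branch vertices to an incident edge by a fixed rule). I would then prove that each edge of $H$ receives at most two charges, which immediately gives $t \le 2\cardd{H}$ and hence $cutmim_G(A,\overline{A}) \le 2\cardd{H}$. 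To see the per-edge bound, suppose three pairs $i,j,k$ are charged to the same edge $e$, with $p_i,p_j,p_k$ appearing in this order along $P_e$. Since $a_j \in A$ means $\ell(a_j)$ is small, the connected set $M_{a_j}$ must reach a $\pi$-earlier vertex; because $\pi$ traverses $P_e$ contiguously, this forces $M_{a_j}$ to cover the initial stretch of $P_e$ up to $p_j$, in particular to contain $p_i$. As $p_i \in M_{b_i}$, we obtain $M_{a_j}\cap M_{b_i} \ni p_i$, \ie the crossing edge $a_jb_i$ with $i \neq j$, contradicting that the matching is induced.

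The main obstacle is making the reach step airtight, and this is exactly where the factor of two and the careful choice of $\pi$ are needed. A connected set $M_{a_j}$ containing an interior point $p_j$ of $P_e$ can attain a small $\ell$-value either by running back toward the earlier endpoint of $P_e$ (the good case, which covers $p_i$) or by exiting through the \emph{other} endpoint into a part of $H'$ that $\pi$ happened to rank early --- a genuine possibility once $H$ contains cycles. Controlling this requires orienting and counting each edge-path with respect to both of its ends, which is precisely what turns a naive bound of $\cardd{H}$ into $2\cardd{H}$; I would handle it by splitting the charge of each edge according to the end through which the witnessing reach occurs and arguing at most one pair per (edge, end). Branch vertices shared by several paths and the fact that $M_v \cap P_e$ need not be a single subpath are the remaining technical points, but they are localized and do not affect the count. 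Polynomial-time computability of $(T,\delta)$ is immediate, since everything is read off from the representation and a single DFS of $H'$.
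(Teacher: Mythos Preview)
The paper does not prove this statement; it is quoted from \cite{FGR17} without argument and used only as a black box to transfer the $H$-graph lower bounds to linear mim-width. There is thus no proof in the present paper to compare your proposal against.

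That said, your sketch is correct and in fact simpler than you fear in the final paragraph. With three witnesses $p_i, p_j, p_k$ charged to the same edge $e$ and labelled so that $p_j$ lies strictly between the other two along $P_e$ (so $p_j$ is an interior subdivision vertex), take any path inside $M_{a_j}$ from $p_j$ to a vertex of $\pi$-rank at most the threshold $\theta$. From an interior point of $P_e$ such a path can only proceed along $P_e$ until it either stops inside $P_e$ or reaches an endpoint. In the first case, or if it exits through the $\pi$-earlier endpoint, the monotone contiguous ordering of the interior of $P_e$ in $\pi$ together with $\theta < p_i$ (all witnesses lie in some $M_{b_\ell}$, hence have $\pi$-rank above $\theta$) forces the path to cover $p_i$; if instead it exits through the other endpoint it covers $p_k$. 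Either way $M_{a_j}$ meets $M_{b_i}$ or $M_{b_k}$, contradicting that the matching is induced. Hence ``at most two per edge'' holds directly and the $(\text{edge},\text{end})$ refinement is unnecessary. Your residual worries are also benign: the witnesses are pairwise distinct (if $p_i=p_j$ then $a_jb_i \in E(G)$ immediately), so the middle one of any three on $P_e$ is automatically interior; and $M_v \cap P_e$ failing to be a single subpath just means $M_v$ contains an endpoint of $P_e$, which the case analysis already handles.
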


\subsection{Maximization Problems}
The first lower bound concerns several maximization problems that can be expressed in the $(\sigma, \rho)$ framework. Recall that the {\sc Independent Set} problem can be formulated as \maxprob-$(\{0\}, \bN)$. 
The following result states that a class of problems that generalize the {\sc Independent Set} problem where each vertex in the solution is allowed to have at most some fixed number of $d$ neighbors of the solution, and several variants thereof, is \Wone{}-hard on $H$-graphs parameterized by $\cardd{H}$ plus solution size.

    \newcommand{\probset}{$(\sigma^*, \bN_{\ge x})$}
    \newcommand{\exactprobset}{$(\{d\}, \{d+1, \ldots, d+k\})$}
    \newcommand{\prob}{\maxprob-{\sc \probset ~Domination}}

\begin{theorem}\label{thm:lb:max}
	For any fixed $d \in \bN$ and $x \le d+1$, the following holds. Let $\sigma^* \subseteq \bN_{\le d}$ with $d \in \sigma^*$. Then, \prob ~is \Wone{}-hard on $H$-graphs parameterized by the number of edges in $H$ plus solution size, and the hardness holds even if an $H$-representation of the input graph is given.
\end{theorem}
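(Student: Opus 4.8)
The plan is to give a polynomial-time parameterized reduction from \textsc{Independent Set} on $H$-graphs, which Fomin et al.~\cite{FGR17} proved \Wone{}-hard parameterized by $\cardd{H}$ plus solution size even when an $H$-representation is given. Starting from an instance $(G,k)$ where $G$ is an $H$-graph supplied with its $H$-representation, I would build a graph $G'$ that is an $H'$-graph with $\cardd{H'} = \bigO(\cardd{H})$, together with a target $k'$ bounded by a function of $k$ and $\cardd{H}$, so that $G$ has an independent set of size $k$ if and only if $G'$ has a $(\sigma^*,\bN_{\ge x})$-dominator of size at least $k'$. Since every gadget below is glued locally onto the portion of the subdivision $H'$ representing a vertex $v$, the representation extends with only $\bigO(\cardd{H})$ new edges and is computable in polynomial time; by Proposition~\ref{thm:h:graph:mim} this also transfers the hardness to linear mim-width plus solution size.

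The core gadget exploits the hypotheses $d\in\sigma^*$ and $\sigma^*\subseteq\bN_{\le d}$ as follows. For each $v\in V(G)$ I would attach a clique $B_v$ on $d+1$ vertices, single out a \emph{port} $p_v\in B_v$, and for every edge $uv\in E(G)$ add the edge $p_up_v$. The intended solutions select, for an independent set $I\subseteq V(G)$, all of $\bigcup_{v\in I}B_v$: inside a fully selected clique every vertex has exactly $d$ neighbours in $S$, and $d\in\sigma^*$ makes this feasible, while two adjacent cliques cannot both be fully selected, since then a port would see its $d$ clique-mates plus the neighbouring port, i.e.\ $d+1$ vertices of $S$, which is forbidden because $\sigma^*\subseteq\bN_{\le d}$. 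This is exactly what turns the upper cap coming from $\sigma^*$ into the independence constraint, and it yields the size correspondence $|S| = (d+1)|I|$ up to the contribution of the auxiliary vertices described next.

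To realise the lower bound $\rho=\bN_{\ge x}$ I would add a small forced skeleton of auxiliary vertices whose only locally feasible state places a fixed, $\bigO(\cardd{H})$-sized subset into $S$; this skeleton both $\rho$-dominates the unselected cliques and, when $0\in\sigma^*$, suppresses ``escape'' solutions that would otherwise arise from sparse induced subgraphs. The hypothesis $x\le d+1$ is precisely what makes this simultaneously compatible with the upper cap: an unselected port has at most $d+1$ potential neighbours in $S$ (its $d$ clique-mates together with one neighbouring port), so demanding at least $x\le d+1$ of them in $S$ is satisfiable, whereas a selected vertex must still be kept at a value in $\sigma^*\subseteq\bN_{\le d}$. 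Then $k'$ is the forced-skeleton size plus $(d+1)k$, both bounded in terms of $k$ and $\cardd{H}$.

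The main obstacle is the soundness direction: arguing that a maximum $(\sigma^*,\bN_{\ge x})$-dominator necessarily has the structured form above, rather than gaining size from \emph{partially} selected cliques or, when small values lie in $\sigma^*$, from unintended sparse sets. I would handle this by an exchange/bookkeeping argument showing that any feasible $S$ can be transformed, without decreasing $|S|$, into one that selects whole cliques over an independent set of $G$ plus exactly the forced skeleton, using the exact degree counts inside each clique together with the domination constraint to rule out mixed states. Making this argument uniform over all admissible $\sigma^*$ and all $x\le d+1$ — so that one construction covers \textsc{Induced Matching}, \textsc{Induced $d$-Regular Subgraph}, \textsc{Induced Subgraph of Max Degree $\le d$} and their $\rho=\bN_{\ge x}$ variants at once — is where I expect the bulk of the technical care to go.
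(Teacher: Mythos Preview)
Your plan has a genuine gap in the parameter bound. You attach a $(d{+}1)$-clique $B_v$ to \emph{every} vertex $v$ of the $H$-graph $G$ and assert that the result is an $H'$-graph with $\cardd{H'}=\bigO(\cardd{H})$. This is not true in general: the non-port vertices of $B_v$ must be adjacent to the rest of $B_v$ and to nothing outside it, so each clique needs its own private region in the model, which forces $\Omega(\card{V(G)})$ new branching in the base multigraph $H'$. Already for interval graphs (where $H$ is a single edge) this fails: take three pairwise intersecting intervals $[0,3],[1,4],[2,5]$; there is no interval meeting $[1,4]$ but avoiding both others, so adding a pendant vertex to $[1,4]$ destroys intervality. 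Since $\card{V(G)}$ is unbounded in terms of $\cardd{H}$ and $k$, the reduction is not parameter-preserving. There is a second, related problem with the ``forced skeleton'': a skeleton vertex in $S$ that is adjacent to the non-port vertices of all cliques pushes, for every \emph{selected} clique, those vertices to $d{+}1$ neighbours in $S$, violating $\sigma^*\subseteq\bN_{\le d}$; if instead it is adjacent only to ports, the non-port vertices of unselected cliques still have zero neighbours in $S$. Any per-clique fix reintroduces the $\card{V(G)}$-sized blow-up.

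The paper avoids both issues by \emph{not} treating the $H$-graph as a black box. It goes back to \textsc{Multicolored Clique} and works inside the explicit Fomin--Golovach--Raymond construction, where the vertices of the $H$-graph are already grouped into only $k+\binom{k}{2}$ cliques $Z(i)$ and $R(i,j)$, each sitting on its own node $u_i$ or $w_{i,j}$ of $H$. One bipartite gadget $\fB$ (a $K_{d,d-1}$ with one side universal to the cluster) is then attached per \emph{cluster}, not per vertex, adding only $\bigO(k^2 d^2)$ edges to $H$. In the intended solution exactly one cluster vertex together with all $2d{-}1$ gadget vertices lies in $S$; every $S$-vertex gets exactly $d$ neighbours in $S$, and every non-$S$ cluster vertex sees the $d$ universal gadget vertices plus the selected cluster vertex, hence $\ge d{+}1\ge x$. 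Soundness is a clean counting argument: each cluster contributes at most $2d$ to $\card{S}$, and equality forces precisely one $Z(i)$- or $R(i,j)$-vertex into $S$, which via Observation~\ref{obs:claim:7} recovers the multicolored clique.
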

\begin{proof}
    To prove the theorem, we provide a reduction from {\sc Multicolored Clique} where given a graph $G$ and a partition $V_1, \ldots, V_k$ of $V(G)$, the question is whether $G$ contains a clique of size $k$ using precisely one vertex from each $V_i$ ($i \in [k]$). This problem is known to be $\Wone$-complete~\cite{FHRV09,Pie03}.
    
    Let $(G, V_1, \ldots, V_k)$ be an instance of {\sc Multicolored Clique}. We can assume that $k \ge 2$ and that $\card{V_i} = p$ for $i \in [k]$. If the second assumption does not hold, let $p \defeq \max_{i \in [k]} \card{V_i}$ and add $p - \card{V_i}$ isolated vertices to $V_i$, for each $i \in [k]$. (Note that adding isolated vertices does not change the answer to the problem.)
    For $i \in [k]$, we denote by $v^i_1, \ldots, v^i_p$ the vertices of $V_i$. 
    We first describe the reduction of Fomin et al.~\cite{FGR17} and then explain how to modify it to prove the theorem.
    
    \medskip
    \noindent{\bf The Construction of Fomin, Golovach and Raymond~\cite{FGR17}.} The graph $H$ is obtained as follows.
    \begin{enumerate}[1.]
        \item Construct $k$ nodes $u_1, \ldots, u_k$.\label{construction:H:vertices}
        \item For every $1 \le i < j \le k$, construct a node $w_{i, j}$ and two pairs of parallel edges $u_i w_{i, j}$ and $u_j w_{i, j}$.\label{construction:H:edges}
    \end{enumerate}
    We then construct the subdivision $H'$ of $H$ by first subdividing each edge $p$ times.
    We denote the subdivision nodes for $4$ edges of $H$ constructed for each pair $1 \le i < j \le k$ in Step~\ref{construction:H:edges} by 
$x_1^{(i, j)}, \ldots, x_p^{(i, j)}$, $y_1^{(i, j)}, \ldots, y_p^{(i, j)}$, $x_1^{(j, i)}, \ldots, x_p^{(j, i)}$, and $y_1^{(j, i)}, \ldots, y_p^{(j, i)}$.
    To simplify notation, we assume that $u_i = x_0^{(i ,j)} = y_0^{(i, j)}$, $u_j = x_0^{(j, i)} = y_0^{(j, i)}$ and $w_{i, j} = x^{(i, j)}_{p+1} = y^{(i, j)}_{p + 1} = x^{(j, i)}_{p+1} = y^{(j, i)}_{p + 1}$.
    
    We now construct the $H$-graph $G''$ by defining its $H$-representation $\cM = \{M_v\}_{v \in V(G'')}$ where each $M_v$ is a connected subset of $V(H')$. (Recall that $G$ denotes the graph of the {\sc Multicolored Clique} instance.)
    \begin{enumerate}
        \item For each $i \in [k]$ and $s \in [p]$, construct a vertex $z_s^i$ with model
            \begin{align*}
                M_{z_s^i} \defeq \bigcup\nolimits_{j \in [k], j \neq i}\left(\left\lbrace x_0^{(i, j)}, \ldots, x_{s - 1}^{(i, j)}\right\rbrace \cup \left\lbrace y_0^{(i, j)}, \ldots, y_{p-s}^{(i, j)}\right\rbrace \right).
            \end{align*}
        \item For each edge $v_s^i v_t^j \in E(G)$ for $s, t \in [p]$ and $1 \le i < j \le k$, construct a vertex $r_{s, t}^{(i, j)}$ with:
            \begin{align*}
                M_{r_{s, t}^{(i, j)}} \defeq &\left\lbrace x_s^{(i, j)},\ldots, x_{p+1}^{(i, j)}\right\rbrace 
                            \cup \left\lbrace y_{p-s+1}^{(i, j)},\ldots, y_{p+1}^{(i, j)}\right\rbrace \\
                            \cup &\left\lbrace x_t^{(j, i)}, \ldots, x_{p+1}^{(j, i)} \right\rbrace
                            \cup \left\lbrace y_{p-t+1}^{(j, i)},\ldots, y_{p+1}^{(j, i)}\right\rbrace.
            \end{align*}
    \end{enumerate}
    Throughout the following, for $i \in [k]$ and $1 \le i < j \le k$, respectively, we use the notation 
        \begin{align*}
            Z(i) \defeq \bigcup\nolimits_{s \in [p]}\left\lbrace z^i_s \right\rbrace  \mbox{ and } R(i, j) \defeq \bigcup\nolimits_{\substack{v^i_s v^j_t \in E(G), \\  s, t \in [p]}} \left\lbrace r_{s, t}^{(i, j)}\right\rbrace.
        \end{align*}
     We now observe the crucial property of $G''$.
    \begin{nestedobservation}[Claim 7 in~\cite{FGR17}]\label{obs:claim:7}
        For every $1 \le i < j \le k$, a vertex $z_h^i \in V(G')$ (a vertex $z_h^j \in V(G')$) is {\em not} adjacent to a vertex $r^{(i, j)}_{s, t} \in V(G')$ corresponding to the edge $v^i_s v^j_t \in E(G)$ if and only if $h = s$ ($h = t$, respectively).
    \end{nestedobservation}
\newcommand\bipgadget{\fB}
    \noindent{\bf The New Gadget.} We now describe how to obtain from $G''$ a graph $G'$ that will be the graph of the instance of \prob. We do so by adding a gadget to each set $Z(i)$ and $R(i, j)$ (for all $1 \le i < j \le k$). We first describe the gadget and then explain how to modify $H'$ to a new graph $K'$ such that $G'$ is a $K$-graph (where $K$ denotes the graph obtained from $K'$ by undoing the above described subdivisions that were made in $H$ to obtain $H'$).
    \begin{figure}
        \centering
        \includegraphics[height=.235\textheight]{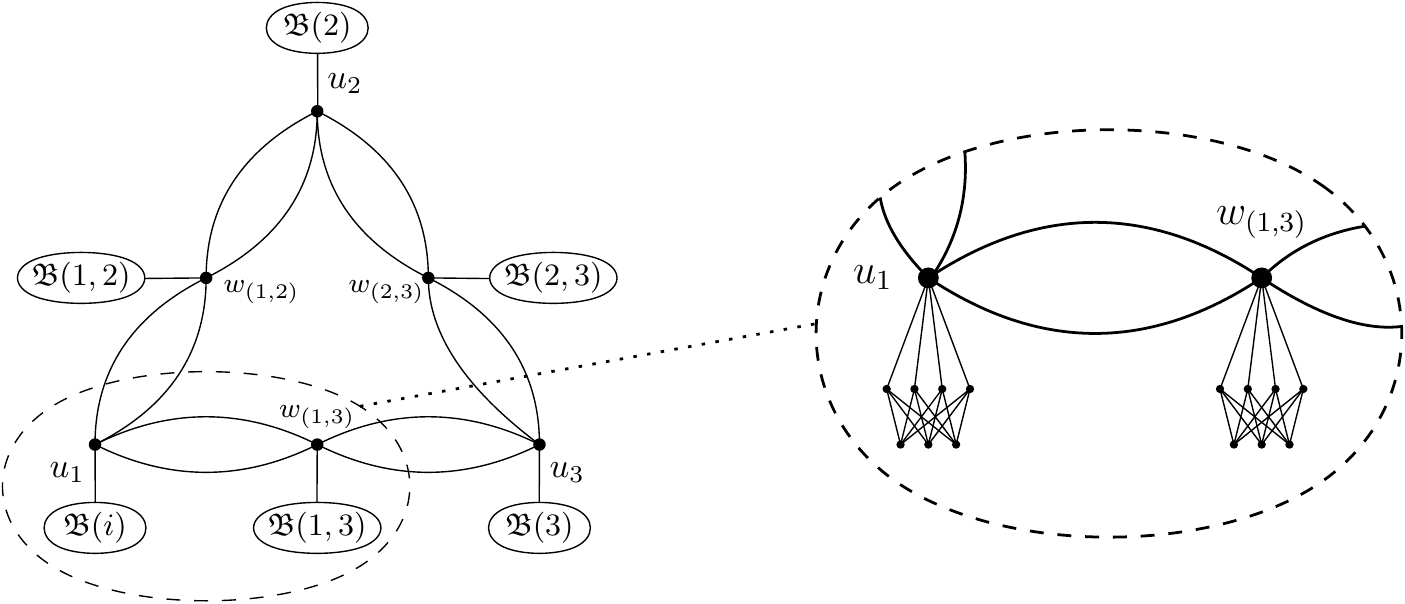}
        \caption{The graph $K$ with respect to which the graph $G'$ constructed in the proof of Theorem~\ref{thm:lb:max} is a $K$-graph. In this example, we have $k = 3$ and $d = 4$.}
        \label{fig:K}
    \end{figure}
    Let $X$ be any set of vertices of $G''$. The gadget $\bipgadget(X)$ is a complete bipartite graph on $2d-1$ vertices and bipartition $(\{\beta_{1,1}, \ldots, \beta_{1, d}\}, \{\beta_{2, 1}, \ldots, \beta_{2, d-1}\})$. such that for $h \in [d]$, each vertex $\beta_{1, h}$ is additionally adjacent to each vertex in $X$. For $1\le i < j \le k$, we use the notation $\bipgadget(i) \defeq \bipgadget(Z(i))$ and $\bipgadget(i, j) \defeq \bipgadget(R(i, j))$ and we denote their vertices by $\beta^i_{\cdot, \cdot}$ and $\beta^{(i, j)}_{\cdot, \cdot}$, respectively.
    
    We obtain $K'$ by `hardcoding' each gadget $\bipgadget(\cdot)$ into $H'$. That is, for $i \in [k]$, 
    we add the graph $\fB(i)$ and connect it to the remaining vertices via the edges
    $u_i \beta^i_{1, h}$ for $h \in [d]$.
    For $1 \le i < j \le k$, we proceed analogously in encoding $\bipgadget(i, j)$ into $H'$. For an illustration of the graph $K$, see Figure~\ref{fig:K}. We observe that $\card{K} = 2d \left(k + {k \choose 2}\right) = kd(k+1)$ and 
    \begin{align}
    \cardd{K} = 4 \binom{k}{2} + \left(k + \binom{k}{2}\right)\cdot(d + d(d-1)) = \frac{1}{2}k\left(d^2(k+1) + 4(k-1)\right).\label{eq:lb:max:num:edges:K}
    \end{align}
    We subdivide all newly introduced edges, i.e.\ all edges in $E(K') \setminus E(H')$ and for an edge $xy \in E(K') \setminus E(H')$, we denote the resulting vertex by $s(x, y)$.
    We are now ready to describe (the $K$-representation of) $G'$.
    \begin{enumerate}[1.]
        \item For all $i \in [k]$ and $s \in [p]$, we add the vertices $s(u_i, \beta^i_{1, h})$ (where $h \in [d]$) to the model of $z^i_s$. 
                For all $1 \le i < j \le k$ and $s, t \in [p]$ with $v^i_s v^j_t \in E(G)$, we add the vertices $s(w_{(i, j)}, \beta^{(i, j)}_{1, h})$ for $h \in [d]$ to the model of $r^{(i, j)}_{s, t}$.
        \item For all $i \in [k]$ and $h \in [d]$, we add a vertex $b^i_{1, h}$ with model 
                $\{\beta^i_{1, h}, s(u_i, \beta^i_{1, h})\} \cup \bigcup\nolimits_{h' \in [d-1]} \{s(\beta^i_{1, h}, \beta^i_{2, h'})\}$.
        \item For all $i \in [k]$ and $h \in [d-1]$, we add a vertex $b^i_{2, h}$ with model 
                $\{\beta^i_{2, h}\} \cup \bigcup\nolimits_{h' \in [d]} \{s(\beta^i_{2, h}, \beta^i_{1, h'})\}$.
        \item For all $v^i_s v^j_t \in E(G)$ (where $1 \le i < j \le k$ and $s ,t \in [p]$) and $h \in [d]$, we add a vertex $b^{(i, j)}_{1, h}$ with model 
                $\{\beta^{(i, j)}_{1, h}, s(w_{(i, j)}, \beta^{(i, j)}_{1, h})\} \cup \bigcup\nolimits_{h' \in [d-1]} \{s(\beta^{(i, j)}_{1, h}, \beta^{(i, j)}_{2, h'})\}$.
        \item For all $v^i_s v^j_t \in E(G)$ (where $1 \le i < j \le k$ and $s ,t \in [p]$) and $h \in [d-1]$, we add a vertex $b^{(i, j)}_{1, h}$ with model 
            $\{\beta^{(i, j)}_{2, h}\} \cup \bigcup_{h' \in [d]} \{s(\beta^{(i, j)}_{2, h}, \beta^{(i, j)}_{1, h'})\}$.
    \end{enumerate}
    \begin{figure}
       \centering
        \includegraphics[height=.235\textheight]{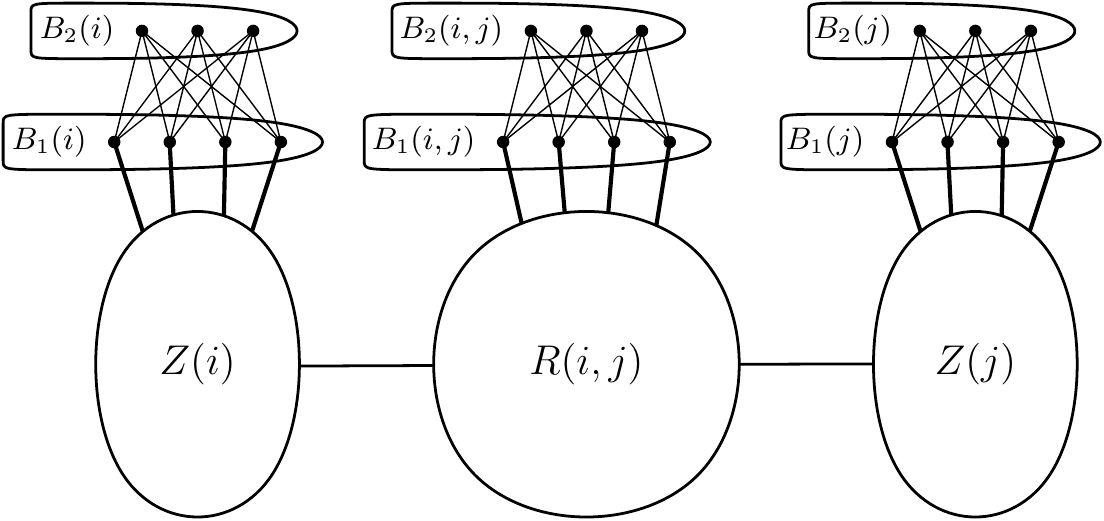}
        \caption{A part of the graph $G'$, where $1 \le i < j \le k$ and $d = 4$.}
        \label{fig:gprime}
    \end{figure}
    One can verify that these five steps introduce the above described vertices to $G'$. For an illustration of $G'$, see Figure~\ref{fig:gprime}.
	We now turn to the correctness proof of the reduction.
	
	For $1 \le i < j \le k$, we let $B_1(i) \defeq \{b^i_{1, 1}, \ldots, b^i_{1, d}\}$, $B_2(i) \defeq \{b^i_{2, 1}, \ldots, b^i_{2, d-1}\}$, $B_1(i, j) \defeq \{b^{(i, j)}_{1, 1}, \ldots, b^{(i, j)}_{1, d}\}$ and $B_2(i, j) \defeq \{b^{(i, j)}_{2, 1}, \ldots, b^{(i, j)}_{2, d-1}\}$; furthermore
    $B(i) \defeq B_1(i) \cup B_2(i)$, $B(i, j) \defeq B_1(i, j) \cup B_2(i, j)$, and $B \defeq \bigcup_{i \in [k]} B(i) \cup \bigcup_{1 \le i < j \le k} B(i, j)$. Note that $\card{B} = (2d-1)(k + \binom{k}{2})$. We furthermore let $k' \defeq 2d \cdot(k + \binom{k}{2})$ and throughout the following, we use the notation 
    \begin{align*}
        Z_{+B}(i) \defeq Z(i) \cup B(i) \mbox{ and }
        R_{+B}(i, j) \defeq R(i, j) \cup B(i, j).
    \end{align*}
    We now prove the first direction of the correctness of the reduction.
    Note that the following claim yields the forward direction of the correctness proof, since a \exactprobset ~set is a \probset ~set. (Recall that $d \in \rho^*$ and $x \le d+1$.)
    \begin{subclaim}\label{claim:lb:max:cor:forward}
        If $G$ has a multicolored clique, then $G'$ has a \exactprobset ~set of size $k' = 2d \cdot (k + \binom{k}{2})$ (assuming $k \ge 3$).
    \end{subclaim}
    \begin{claimproof}
        Let $\{v_{h_1}^1, \ldots, v_{h_k}^k\}$ be the vertex set in $G$ that induces the multicolored clique. By Observation~\ref{obs:claim:7} we can verify that
        \begin{align}
            I \defeq \left\lbrace z^1_{h_1}, \ldots, z^k_{h_k}\right\rbrace \cup \left\lbrace r^{(i, j)}_{h_i, h_j} \mid 1 \le i < j \le k \right\rbrace \label{eq:independent:set}
        \end{align}
        is an independent set in $G'$. We let $S \defeq I \cup B$ and observe that $S$ is a \probset ~set: By construction, there is no edge between any pair of distinct sets of $B(i)$, $B(i')$, $B(i, j)$, $B(i', j')$, for any choice of $1 \le i < j \le k$ and $1 \le i' < j' \le k$. 
        
        Consider any vertex $x \in S$ and suppose wlog.\footnote{The case when $x \in R_{+B}(i, j)$ can be argued for analogously.}~that $x \in Z_{+B}(i)$ for some $i \in [k]$. If $x = z^i_{h_i}$, then $x$ is adjacent to the $d$ vertices 
        $b^i_{1, 1}, \ldots, b^i_{1, d}$,    if 
        $x = b^i_{1, \ell}$ for some $\ell \in [d]$, then $x$ is adjacent to $z^i_{h_i}$ and the vertices 
        $b^i_{2, 1}, \ldots, b^i_{2, d-1}$ and if 
        $x = b^i_{2, \ell'}$ for some $\ell' \in [d-1]$, then it is adjacent to the vertices 
        $b^i_{1, 1}, \ldots, b^i_{1, d}$.
        Hence, in all cases, $x$ has precisely $d$ neighbors in $S$. 
        
        Let $y \in V(G') \setminus S$ and note that $(V(G') \setminus S) \cap B = \emptyset$. If $y \in Z(i)$ for some $i \in [k]$, then $N(y) \cap S \supseteq \{z^i_{h_i}, b^i_{1, 1}, \ldots, b^i_{1, d}\}$, so $\card{N(y) \cap S} \ge d+1$. Since the only additional neighbors of $y$ in $S$ are in the set $R_i \defeq \bigcup_{1 \le j < i} R(j, i) \cup \bigcup_{i < j \le k} R(i, j)$ and $R_i \cap S \subseteq I$, we can conclude that $\card{N(y) \cap (S \setminus B)} \le k-1$, since $I$ contains precisely one vertex from each set $R(i, j)$.
        We have argued that $d+1 \le \card{N(y) \cap S} \le d + k$. If $y \in R(i, j)$ for some $1 \le i < j \le k$, we can argue as before that $\card{N(y) \cap S} \ge d+1$ and since all neighbors of $y$ in $S \setminus B(i, j)$ are contained either in $Z(i)$ or $Z(j)$, we can conclude that $d+1 \le \card{N(y) \cap S} \le d+3 \le d+k$.
        
        It remains to count the size of $S$. Clearly, $\card{I} = k + \binom{k}{2}$ and as observed above, $\card{B} = (2d-1)(k + \binom{k}{2})$, so
        \[
            \card{S} = \card{I} + \card{B} = k + \binom{k}{2} + (2d-1)\left(k + \binom{k}{2}\right) = 2d\left(k + \binom{k}{2}\right) = k',
        \]
        as claimed.
    \end{claimproof}
    We now prove the backward direction of the correctness of the reduction. We begin by making several observations about the structure of \probset ~sets in the graph $G'$.
    \begin{subclaim}\label{claim:lb:max:cor:aux}
        Let $1 \le i < j \le k$. 
        \begin{enumerate}[(i)]
            \item Any \probset ~set in $G'$ contains at most $d+1$ vertices from each $Z(i) \cup B_1(i)$ or $R(i, j) \cup B_1(i, j)$.\label{claim:lb:max:cor:aux:1}
            \item Any \probset ~set contains at most $2d$ vertices from each $Z_{+B}(i)$ or $R_{+B}(i, j)$.\label{claim:lb:max:cor:aux:1b}
            \item If a \probset ~set $S$ contains $2d$ vertices from some $Z_{+B}(i)$ (resp., $R_{+B}(i, j)$), then it contains at least one vertex from $Z(i)$ (resp., $R(i, j)$) and each such vertex in $S \cap Z(i)$ (resp., $S \cap R(i, j)$) has at least $d$ neighbors in $S \cap Z_{+B}(i)$ (resp., $S \cap R_{+B}(i, j)$).\label{claim:lb:max:cor:aux:2}
        \end{enumerate} 
    \end{subclaim}
    \begin{claimproof}
        (\ref{claim:lb:max:cor:aux:1}) We prove the claim w.r.t.~a set $Z(i) \cup B_1(i)$ and remark that a proof for $R(i, j) \cup B_1(i, j)$ works analogously.
        Suppose not and let $S \subseteq V(G')$ be such that it contains at least $d+2$ vertices from some $Z(i) \cup B_1(i)$. Since $\card{B_1(i)} = d$, we know that $S$ contains a vertex from $Z(i)$, say $x$. However, by construction, all vertices in $S \cap (Z(i) \cup B_1(i)) \setminus \{x\}$ are adjacent to $x$, implying that $x$ has at least $d+1$ neighbors in $S$, a contradiction with the fact that $S$ is a \probset ~set. 
        
        (\ref{claim:lb:max:cor:aux:1b}) follows as a direct consequence, since $Z_{+B}(i) \setminus (Z(i) \cup B_1(i)) = B_2(i)$ and $\card{B_2(i)} = d-1$. Similar for $R_{+B}(i, j)$.
        
        For (\ref{claim:lb:max:cor:aux:2}), observe that if $S$ contains $2d$ vertices from $Z_{+B}(i)$, then it contains $B_2(i)$ and $d+1$ vertices from $Z(i) \cup B_1(i)$ by (\ref{claim:lb:max:cor:aux:1}) and the fact that $Z_{+B}(i) \setminus (Z(i) \cup B_1(i)) = B_2(i)$ and $\card{B_2(i)} = d-1$. Since $\card{B_1(i)} = d$, at least one vertex is in $S \cap Z(i)$. The claim now follows as any vertex in $Z(i)$ is adjacent to any other vertex in $Z(i)$ as well as any vertex in $B_1(i)$. Similar for $R_{+B}(i, j)$.
    \end{claimproof}
    \begin{subclaim}\label{claim:lb:max:cor:backward}
        If $G'$ contains a \probset ~set $S$ of size $k' = 2d(k + \binom{k}{2})$, then $G$ contains a multicolored clique.
    \end{subclaim}
    \begin{claimproof}
        Let $S$ be a \probset ~set of size $k'$ in $G'$. 
%
        By Claim~\ref{claim:lb:max:cor:aux}(\ref{claim:lb:max:cor:aux:1b}), we can conclude that $S$ contains precisely $2d$ vertices from each $Z_{+B}(i)$ and each $R_{+B}(i, j)$ (where $1 \le i < j \le k$). Consider any pair $i$, $j$ with $1 \le i < j \le k$. By Claim~\ref{claim:lb:max:cor:aux}(\ref{claim:lb:max:cor:aux:2}) we know that there are vertices 
        \[
            z_{s_i}^i \in Z(i) \cap S, \mbox{~~} z_{s_j}^j \in Z(j) \cap S, \mbox{~~and~} r_{t_i, t_j}^{(i, j)} \in R(i, j) \cap S.
        \]
    Again by Claim~\ref{claim:lb:max:cor:aux}(\ref{claim:lb:max:cor:aux:2}), we can conclude that $z_{s_i}^i    r_{t_i, t_j}^{(i, j)} \notin E(G')$ and $z_{s_j}^j r_{t_i, t_j}^{(i, j)} \notin E(G')$: E.g., $z_{s_i}^i$ has $d$ neighbors in $Z_{+B}(i) \cap S$, so if $z_{s_i}^i     r_{t_i, t_j}^{(i, j)} \in E(G')$, then $z_{s_i}^i$ has $d+1$ neighbors in $S$, a contradiction with the fact that $S$ is a \probset ~set. By Observation~\ref{obs:claim:7}, we then have that $s_i = t_i$ and $s_j = t_j$. We can conclude that $v_{h_i}^i v_{h_j}^j \in E(G)$ and since the argument holds for any pair of indices $i, j$ that $G$ has a multicolored clique.
    \end{claimproof}
    We would like to remark that by the proof of the previous claim, we have established that any \probset ~set $S$ in $G'$ of size $k'$ in fact contains all vertices from $B$ and one vertex from each $Z(i)$ and from each $R(i, j)$. Since this is precisely the shape of the set constructed in the forward direction of the correctness proof, this shows that any \probset ~set of size $k'$ in $G'$ is a \exactprobset ~set (assuming $k \ge 3$).
    
    Claims~\ref{claim:lb:max:cor:forward} and~\ref{claim:lb:max:cor:backward} establish the correctness of the reduction
    We observe that $\card{V(G')} = \cO(\card{V(G)} + d^2 \cdot k^2)$ and clearly, $G'$ can be constructed from $G$ in time polynomial in $\card{V(G)}$, $d$ and $k$ as well. Furthermore, by (\ref{eq:lb:max:num:edges:K}), $\cardd{K} = \cO(d^2 \cdot k^2)$ and the theorem follows.
\end{proof}
By Proposition~\ref{thm:h:graph:mim}, the previous theorem implies
\begin{corollary}
	For any fixed $d \in \bN$ and $x \le d+1$, the following holds. Let $\sigma^* \subseteq \bN_{\le d}$ with $d \in \sigma^*$. Then, \prob ~is \Wone{}-hard parameterized by linear mim-width plus solution size, and the hardness holds even if a corresponding decomposition tree is given.
\end{corollary}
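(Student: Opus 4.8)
The plan is to derive the \Wone-hardness for linear mim-width plus solution size by composing the reduction of Theorem~\ref{thm:lb:max} with the linear mim-width bound of Proposition~\ref{thm:h:graph:mim}, with essentially no new combinatorial work. Recall that the proof of Theorem~\ref{thm:lb:max} is itself a parameterized reduction from {\sc Multicolored Clique} (parameterized by the number of colors $k$): on input $(G, V_1, \ldots, V_k)$ it produces, in time polynomial in $\card{V(G)}$, $d$, and $k$, a graph $G'$ together with an \emph{explicit} $K$-representation of $G'$, where $\cardd{K} = \cO(d^2 k^2)$ and the target solution size is $k' = 2d(k + \binom{k}{2}) = \cO(d k^2)$.

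First I would re-run exactly this reduction to obtain the pair $(G', \cM_K)$, where $\cM_K$ denotes the produced $K$-representation. Since $\cM_K$ is available, I would then invoke Proposition~\ref{thm:h:graph:mim} to compute, in time polynomial in the size of $G'$, a decomposition tree $(T, \delta)$ of $G'$ whose linear mim-width is at most $2 \cdot \cardd{K}$. The output of the composed reduction is the instance $(G', (T, \delta))$ of \prob with target size $k'$, in which a decomposition tree of bounded linear mim-width is supplied as promised by the corollary.

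It then remains only to check that this map is a valid parameterized reduction for the new parameter. Correctness is inherited verbatim from Theorem~\ref{thm:lb:max} (Claims~\ref{claim:lb:max:cor:forward} and~\ref{claim:lb:max:cor:backward}), as the underlying graph $G'$ and the solution size $k'$ are unchanged. For the parameter, I would note that $d$ is a fixed constant, so both the linear mim-width, bounded by $2\cdot\cardd{K} = \cO(d^2 k^2)$, and the solution size $k' = \cO(d k^2)$ are bounded by computable functions of $k$ alone; hence the combined parameter linear mim-width plus solution size is bounded by a function of the {\sc Multicolored Clique} parameter $k$, exactly as required. As {\sc Multicolored Clique} is \Wone-hard parameterized by $k$, so is \prob parameterized by linear mim-width plus solution size, even with the decomposition tree given. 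The only point that requires care, rather than a genuine obstacle, is confirming that the reduction of Theorem~\ref{thm:lb:max} truly emits the $K$-representation (and not merely the abstract graph $G'$), so that Proposition~\ref{thm:h:graph:mim} applies in polynomial time; this is immediate, since the construction defines $G'$ precisely through its $K$-representation.
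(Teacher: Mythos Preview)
Your proposal is correct and follows exactly the paper's approach: the paper simply states that the corollary follows from Theorem~\ref{thm:lb:max} by Proposition~\ref{thm:h:graph:mim}, and you have spelled out precisely this composition, checking that the $K$-representation is explicitly available and that both the linear mim-width bound $2\cdot\cardd{K}$ and the solution size $k'$ are bounded by functions of $k$.
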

    \renewcommand\probset{$(\sigma^*, \rho^*)$}
    \renewcommand\prob{\minprob -\probset {\sc ~Domination}}

\subsection{Minimization Problems}
In this section we prove hardness of minimization versions of several $(\sigma, \rho)$ problems. We obtain our results by modifying a reduction from {\sc Multicolored Independent Set} to {\sc Dominating Set} due to Fomin et al.~\cite{FGR17}. In the {\sc Multicolored Independent Set} problem we are given a graph $G$ and a partition $V_1, \ldots, V_k$ of its vertex set $V(G)$ and the question is whether there is an independent set $\{v_1, \ldots, v_k\} \subseteq V(G)$ in $G$ such that for each $i \in [k]$, $v_i \in V_i$. The \Wone{}-hardness of this problem follows immediately from the \Wone{}-hardness of the {\sc Multicolored Clique} problem.

\medskip
\noindent{\bf The Reduction of Fomin et al.~\cite{FGR17}.} Let $G$ be an instance of {\sc Multicolored Independent Set} with partition $V_1, \ldots, V_k$ of $V(G)$. Again we can assume that $k \ge 2$ and that $\card{V_i} = p$ for all $i \in [k]$. If the latter condition does not hold, let $p \defeq \max_{i \in [k]} \card{V_i}$ and for each $i \in [k]$, add $p - \card{V_i}$ vertices to $V_i$ that are adjacent to all vertices in each $V_j$ where $j \neq i$. It is clear that the resulting instance has a multicolored independent set if and only if the original instance does.

The graph $G'$ of the {\sc Minimum Dominating Set} instance is obtained from the graph $G''$ as constructed in the proof of Theorem~\ref{thm:lb:max}.\footnote{See the paragraph `The Construction of Fomin, Golovach and Raymond'.} The only difference is that for $i \in [k]$, a vertex $b_i$ is added whose model is $\{u_i\}$, i.e. it is adjacent to all vertices in $Z(i)$ and nothing else. 
We argue that $G$ has a multicolored independent set if and only if $G'$ has a dominating set of size $k$.

For the forward direction, if $G$ has a multicolored independent set $I \defeq \lbrace v^1_{h_1}, \ldots, v^k_{h_k}\rbrace$, then using Observation~\ref{obs:claim:7}, one can verify that $D \defeq \lbrace z^1_{h_1}, \ldots, z^k_{h_k}\rbrace$ is a dominating set in $G'$: Clearly, for each $i \in [k]$, the vertices in $Z(i) \cup \{b_i\}$ are dominated by $z^i_{h_i} \in D$. Suppose there is a vertex $r^{(i, j)}_{s, t} \in R(i, j)$ that is not dominated by $D$, then in particular it is neither adjacent to $z^i_{h_i}$ nor to $z^j_{h_j}$. By Observation~\ref{obs:claim:7}, this implies that $G$ contains the edge $v^i_{h_i} v^j_{h_j}$, a contradiction with the fact that $I$ is an independent set.

For the backward direction, suppose that $G'$ has a dominating set $D$ of size $k$. Due to the vertices $b_i$ (for $i \in [k]$), we can conclude that for all $i \in [k]$, $D \cap (Z(i) \cup \{b_i\}) \neq \emptyset$. If $D$ contains $b_i$ for some $i \in [k]$, then we can replace $b_i$ by any vertex in $Z(i)$ such that the resulting set is still a dominating set of $D$, so we can assume that $D = \{z^1_{h_1}, \ldots, z^k_{h_k}\}$. We claim that $\{v^1_{h_1}, \ldots, v^k_{h_k}\}$ is an independent set in $G$. Suppose that for $i, j \in [k]$, there is an edge $v^i_{h_i} v^j_{h_j} \in E(G)$. Observation~\ref{obs:claim:7} implies that $r^{(i, j)}_{h_i, h_j}$ is neither adjacent to $z^{i}_{h_i}$ nor to $z^j_{h_j}$, so $r^{(i, j)}_{h_i, h_j}$ is not dominated by $D$, a contradiction.
    \renewcommand\probset{$(\sigma^*, \rho^*)$}
    \renewcommand\prob{\minprob -\probset {\sc ~Domination}}
\begin{remarknr}\label{rem:ds:hardness}   
    We would like to remark that the above reduction is to the \prob ~problem, for all $\sigma^* \subseteq \bN$ with $0 \in \sigma^*$ and $\rho^* \subseteq \bN^+$ with $\{1, 2\} \subseteq \rho^*$.
\end{remarknr}

\medskip
\noindent{\bf Adaption to Total Domination Problems.} Recall that the $(\sigma, \rho)$-formulation for {\sc Dominating Set} is $(\bN, \bN^+)$. We now explain how to modify the above reduction to obtain $\Wone$-hardness for dominating set problems where each vertex in the solution has to have at least one neighbor in the solution. These problems include {\sc Total Dominating Set} and {\sc Dominating Induced Matching}, which can be formulated as $(\bN^+, \bN^+)$ and $(\{1\}, \bN^+)$, respectively. The minimization problem of either of them is known to be NP-complete.
\begin{theorem}\label{cor:total:ds}
	For $\sigma^* \subseteq \bN^+$ with $1 \in \sigma^*$ and $\rho^* \subseteq \bN^+$ with $\{1, 2\} \subseteq \rho^*$, \prob ~is \Wone{}-hard on $H$-graphs parameterized by the number of edges in $H$ plus solution size, and the hardness holds even when an $H$-representation of the input graph is given.
\end{theorem}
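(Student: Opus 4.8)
The plan is to adapt the reduction of Fomin et al.\ from {\sc Multicolored Independent Set} that is spelled out just above (and that already gives \Wone{}-hardness in the complementary case $0 \in \sigma^*$, see Remark~\ref{rem:ds:hardness}). Starting from a padded instance $(G, V_1, \ldots, V_k)$ with $\card{V_i} = p$, I would reuse the graph $G''$ together with its sets $Z(i)$ and $R(i,j)$, and for each color $i \in [k]$ add a \emph{partner} vertex $b_i$ adjacent to exactly $Z(i)$ (the model $\{u_i\}$ already used in the $0 \in \sigma^*$ reduction) together with a \emph{pendant} $b_i'$ whose only neighbour is $b_i$. To keep $b_i'$ from becoming adjacent to $Z(i)$ inside an $H$-graph, I would attach a fresh private node $\mu_i$ to $u_i$ and set $M_{b_i} \defeq \{u_i,\mu_i\}$, $M_{b_i'} \defeq \{\mu_i\}$; this adds only $\cO(k)$ edges, so the resulting host graph $K$ satisfies $\cardd{K} = \cO(k^2)$, which is the bound a parameterized reduction needs. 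The target budget is $k' \defeq 2k$, and I claim $G$ has a multicolored independent set iff $G'$ admits a \prob ~set of size at most $2k$.

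For the forward direction, from a multicolored independent set $\{v^i_{h_i}\}$ I take $D \defeq \{z^1_{h_1}, \ldots, z^k_{h_k}\} \cup \{b_1, \ldots, b_k\}$. Since distinct $Z(i)$ are mutually non-adjacent and each $b_i$ touches only $Z(i)$, every vertex of $D$ has \emph{exactly one} neighbour in $D$ ($b_i$ is matched to $z^i_{h_i}$ and conversely), which lies in $\sigma^*$ because $1 \in \sigma^*$. Each unchosen $z^i_s$ sees precisely $z^i_{h_i}$ and $b_i$ (the clique $Z(i)$ plus its partner), using $2 \in \rho^*$; each $b_i'$ sees only $b_i$; and by Observation~\ref{obs:claim:7} every surviving $r^{(i,j)}_{s,t}$ sees $z^i_{h_i}$ or $z^j_{h_j}$ (it cannot miss both, as $r^{(i,j)}_{h_i,h_j}$ would require the forbidden edge $v^i_{h_i}v^j_{h_j}$), giving a count in $\{1,2\} \subseteq \rho^*$.

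The backward direction is the heart of the argument and the step I expect to be the main obstacle. Given any valid solution $D$, the pendant forces $b_i \in D$ for all $i$: if $b_i \notin D$ then $b_i'$ would have $0$ neighbours in $D$, impossible since $0 \notin \sigma^* \cup \rho^*$. This consumes $k$ vertices, and because $\sigma^* \subseteq \bN^+$ each $b_i$ needs a neighbour in $D$, so $D$ meets each of the $k$ pairwise-disjoint \emph{partner pools} $Z(i) \cup \{b_i'\}$. Hence $\card{D} \ge 2k$ always, so a solution of size at most $2k$ has size exactly $2k$ and is pinned down: exactly the $b_i$'s plus one vertex from each pool, with no room for any $r$-vertex and no room for two vertices of a single $Z(i)$ (this is what rules out the otherwise-valid ``two choices in one colour'' solution, which is the reason the single-vertex gadget without a pendant fails).

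It remains to read a multicolored independent set off this structure. I would split the colors by whether their pool representative is some $z^i_{h_i}$ (set $C_a$) or is $b_i'$ (set $C_b$), and choose $\hat v_i \defeq v^i_{h_i}$ for $i \in C_a$ and an arbitrary \emph{original} (non-padding) vertex $\hat v_i \in V_i$ for $i \in C_b$. Analysing domination of the $R(i,j)$-vertices through Observation~\ref{obs:claim:7} yields exactly the no-edge conditions required: for $i,j \in C_a$ no edge $v^i_{h_i}v^j_{h_j}$ exists; for $i \in C_a$, $j \in C_b$ the vertex $v^i_{h_i}$ has \emph{no} neighbour in $V_j$ at all; and for $i,j \in C_b$ there are no $G$-edges between $V_i$ and $V_j$ whatsoever. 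Thus $\{\hat v_i\}$ is independent. The delicate point, and why a naive ``replace $b_i'$ by some $z^i_s$'' argument does not work, is that such a replacement can raise an $r$-vertex from $2$ to $3$ neighbours in $D$ and hence out of $\rho^*$ (which may be exactly $\{1,2\}$); reading the independent set directly off the domination constraints never modifies $D$ and so sidesteps this entirely. Since $\cardd{K} = \cO(k^2)$ and $k' = 2k$, this is a parameterized reduction producing an explicit $K$-representation, giving the theorem; the linear mim-width statement then follows via Proposition~\ref{thm:h:graph:mim} exactly as in the maximization case.
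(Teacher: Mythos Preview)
Your proof is correct and uses essentially the same construction as the paper: the paper's $c_i$ is your pendant $b_i'$, and the paper likewise hardcodes it into $H$ at a cost of $\cO(k)$ extra edges. The forward direction is identical, and in the backward direction you make explicit what the paper states tersely (the pendant forces $b_i\in D$ outright, not merely ``without loss of generality''). Your case split $C_a/C_b$ is a nice way to avoid the paper's second ``we can assume'' swap, but your stated worry that such a swap could push an $r$-vertex from $2$ to $3$ neighbours is overstated: if $i\in C_b$ then $D\cap Z(i)=\emptyset$, so every $r$-vertex incident to color $i$ has at most one $D$-neighbour before the swap and at most two after; hence the paper's swap is in fact sound, though your direct argument is cleaner.
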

\begin{proof}
    We modify the above reduction from {\sc Multicolored Independent Set} as follows. For each $i \in [k]$, we add another vertex $c_i$ to $G'$ which is only adjacent to $b_i$. We let $B \defeq \bigcup_{i \in [k]} \{b_i\}$ and $C \defeq \bigcup_{i \in [k]} \{c_i\}$. Note that these new vertices can be hardcoded into $H$ with the number of edges in $H$ increasing only by $k$. 
    To argue the correctness of the reduction, we now show that $G$ has a multicolored independent set if and only if $G'$ has a \probset ~set of size $k' \defeq 2k$.
    
    For the forward direction, suppose that $G$ has an independent set $\{v^1_{h_1}, \ldots, v^k_{h_k}\}$. Then, $D' \defeq \{z^1_{h_1}, \ldots, z^k_{h_k}\}$ dominates all vertices in $V(G') \setminus C$ by the same argument as above and $D \defeq D' \cup B$ dominates all vertices of $G'$. Furthermore, each $x \in D$ has precisely one neighbor in $D$: For each such $x$, either $x = z^i_{h_i}$ or $x = b_i$ for some $i \in [k]$. In the former case, $N(x) \cap D = \{b_i\}$ and in the latter case, $N(x) \cap D = \{z^i_{h_i}\}$. Now let $y \in V(G') \setminus D$. If $y \in Z(i) \cup \{c_i\}$ for $i \in [k]$, then $N(y) \cap D = \{z^i_{h_i}, b_i\}$. If $y \in R(i, j)$ for some $1 \le i < j \le k$, then $y$ is either dominated by one of $z^i_{h_i}$ and $z^j_{h_j}$ or by both and it cannot have any other neighbors in $D$ by construction. Since $1 \in \sigma^*$ and $\{1, 2\} \subseteq \rho^*$, $D$ is a \probset ~set and clearly, $\card{D} = 2k$.
    
    For the backward direction, suppose that $G'$ has a \probset ~set $D$ of size $2k$. Let $i \in [k]$. Since $0 \notin \sigma^*$ and $0 \notin \rho^*$, we have that either $c_i \in D$ or $b_i \in D$ (either $c_i$ is dominating or it needs to be dominated). Since $c_i$ does not dominate any vertex in $G'$ other than $b_i$ and $b_i$ dominates $c_i$ plus all vertices in $Z(i)$, we can always assume that $b_i \in D$ and hence that $B \subseteq D$. Since $0 \notin \sigma^*$, all vertices of $B$ have a neighbor in $D$. For each $i \in [k]$, we can assume that this neighbor is some $z^i_{h_i}$ (rather than $c_i$, for similar reasoning as above). We have that $D = B \cup \{z^1_{h_1}, \ldots, z^k_{h_k}\}$ and since $D$ is a dominating set (in other words, $0 \notin \rho^*$), we can again argue using Observation~\ref{obs:claim:7} that $\{v^1_{h_1}, \ldots, v^k_{h_k}\}$ is an independent set in $G$.
\end{proof}
As a somewhat orthogonal result to Theorem~\ref{thm:lb:max}, we now show hardness of several problems related to the {\sc $d$-Dominating Set} problem, where each vertex that is not in the solution set has to be dominated by at least some fixed number of $d$ neighbors in the solution.

\medskip
\noindent{\bf Adaption to $d$-Domination Problems.} We use a similar gadget constructed in the proof of Theorem~\ref{thm:lb:max} to prove $\Wone$-hardness of several $(\sigma, \rho)$ problems where each vertex has to be dominated by at least $d$ vertices. In particular, we prove the following theorem. Note that the analogous statement of the following theorem for $d = 1$ is proved by the reduction explained in the beginning of this section, see Remark~\ref{rem:ds:hardness}.
\begin{theorem}\label{thm:lb:min:d:dom}
    For any fixed $d \in \bN_{\geq 2}$,\footnote{Note that the analogous statement for $d=1$ follows from the reduction given in~\cite{FGR17}.} the following holds. Let $\sigma^* \subseteq \bN$ with $\{0, 1, d-1\} \subseteq \sigma^*$ and $\rho^* \subseteq \bN_{\ge d}$ with $\{d, d+1\} \subseteq \rho^*$. Then, \prob ~is $\Wone$-hard on $H$-graphs parameterized by the number of edges in $H$ plus solution size, and the hardness even holds when an $H$-representation of the input graph is given.
\end{theorem}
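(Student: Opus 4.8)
The plan is to give an FPT reduction from \textsc{Multicolored Independent Set}, reusing the graph $G''$ and its key property (Observation~\ref{obs:claim:7}) from the construction in the proof of Theorem~\ref{thm:lb:max}, and augmenting it with a small \emph{domination-boosting gadget} attached to each set $Z(i)$ and each set $R(i,j)$ (both of which are cliques in $G''$, as all their models share $u_i$ resp.\ $w_{i,j}$). The gadget's purpose is to supply exactly $d-1$ of the required $d$ dominators to every vertex of $Z(i)$ and $R(i,j)$ from ``outside'', so that the one remaining dominator is governed by the choice of a single representative $z^i_{h_i}$ per colour, whose adjacency pattern to the $r$-vertices (Observation~\ref{obs:claim:7}) re-encodes independence. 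Concretely, to each $X \in \{Z(i)\} \cup \{R(i,j)\}$ I attach $d-1$ \emph{contact} vertices that are pairwise adjacent and each adjacent to all of $X$; to each $R(i,j)$-gadget I additionally attach one \emph{hub} $q_{i,j}$ adjacent precisely to its $d-1$ contacts, and to each colour $i$ I add a vertex $b_i$ adjacent to all of $Z(i)$ together with its contacts (a $d$-domination analogue of the vertex $b_i$ used earlier in this section). As in Theorem~\ref{thm:lb:max}, all of this can be \emph{hardcoded} into the host graph, yielding a graph $K$ with $\cardd{K} = \cO(d^2 k^2)$ of which $G'$ is a $K$-graph.

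I set the target size to $k' \defeq d\bigl(k + \binom{k}{2}\bigr)$ and the canonical solution to $S \defeq \{z^i_{h_i} : i \in [k]\} \cup \{\text{all contacts}\} \cup \{q_{i,j} : 1 \le i < j \le k\}$, where $\{v^1_{h_1}, \ldots, v^k_{h_k}\}$ is the given multicolored independent set. For the forward direction I would verify the $(\sigma^*, \rho^*)$-constraints vertex by vertex: each contact has its $d-2$ fellow contacts plus exactly one further $S$-neighbour (the chosen $z^i_{h_i}$ for a $Z$-gadget, the hub for an $R$-gadget), while $z^i_{h_i}$ and each $q_{i,j}$ see exactly the $d-1$ contacts of their region, so every $S$-vertex has $d-1 \in \sigma^*$ neighbours in $S$; every non-selected $z^i_s$ gets $d-1$ dominators from the contacts plus one from $z^i_{h_i}$ (recall $Z(i)$ is a clique), hence exactly $d$; every $b_i$ gets $d-1+1 = d$; and every $r$-vertex $r^{(i,j)}_{s,t}$ gets $d-1$ from the contacts plus $(1-[s=h_i]) + (1-[t=h_j])$ from the two representatives, i.e.\ $(d+1) - [s=h_i] - [t=h_j] \in \{d, d+1\} \subseteq \rho^*$, where independence guarantees the under-dominated case $s=h_i \wedge t=h_j$ never occurs because the corresponding $r$-vertex does not exist. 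The count is $d$ per colour and $d$ per pair, giving $\card{S} = k'$.

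For the backward direction I would partition $V(G')$ into regions $P(i) \defeq Z(i) \cup \{\text{its contacts}\} \cup \{b_i\}$ and $Q(i,j) \defeq R(i,j) \cup \{\text{its contacts}\} \cup \{q_{i,j}\}$, and show $\card{S \cap P(i)} \ge d$ and $\card{S \cap Q(i,j)} \ge d$ for any $(\sigma^*,\rho^*)$-set $S$: if some contact lies outside $S$ then, since all its neighbours lie inside its own region, it has $d$ $S$-neighbours there; otherwise all $d-1$ contacts are in $S$ and the extra unit is forced either by $b_i$ (which, having all contacts as neighbours, needs a further neighbour in $Z(i)$) or by the degree-$(d-1)$ hub $q_{i,j}$ (which cannot be $d$-dominated and so lies in $S$). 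Summing over the $k + \binom{k}{2}$ regions gives $\card{S} \ge k'$, so a size-$k'$ set meets each bound with equality. From equality I would extract for each colour $i$ a representative $z^i_{h_i} \in S \cap Z(i)$, argue that $S$ contains all contacts and no $r$-vertex, and then invoke Observation~\ref{obs:claim:7}: the vertex $r^{(i,j)}_{h_i,h_j}$ would receive only $d-1 < d$ dominators, so it cannot exist, i.e.\ $v^i_{h_i} v^j_{h_j} \notin E(G)$; hence $\{v^1_{h_1}, \ldots, v^k_{h_k}\}$ is a multicolored independent set.

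The main obstacle is the backward structural extraction under the \emph{tightest} admissible sets $\sigma^* = \{0,1,d-1\}$ and $\rho^* = \{d,d+1\}$. Because contacts are adjacent to every vertex of the clique $X$, an $r$-vertex placed in $S$ could substitute for a missing contact as a dominator, and a vertex $b_i \in S$ could substitute for a missing colour-representative; ruling out such degenerate size-$k'$ solutions — in particular showing that exactly one $z^i_{h_i}$ is chosen per colour and that no vertex is over-dominated beyond $d+1$ — will require a careful counting/exchange argument in the spirit of Claim~\ref{claim:lb:max:cor:aux} and of the $b_i \leftrightarrow z^i_{h_i}$ replacement used earlier in this section. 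Once the canonical shape is established, the bounds $\cardd{K} = \cO(d^2 k^2)$ and $k' = \cO(d k^2)$ make this an FPT reduction, and combining it with Proposition~\ref{thm:h:graph:mim} yields the corresponding \Wone{}-hardness parameterized by linear mim-width plus solution size.
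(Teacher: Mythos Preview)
Your overall plan (reduce from \textsc{Multicolored Independent Set}, reuse $G''$ and Observation~\ref{obs:claim:7}, and attach a ``domination-boosting'' gadget that supplies $d-1$ dominators so that the last one encodes the colour choice) is sound, and your forward direction and the per-region lower bound $\card{S \cap P(i)}, \card{S \cap Q(i,j)} \ge d$ are correct. But the backward direction is not merely ``intricate'' --- as written, it is \emph{false} for $d = 2$. Your counting does not force a single representative $z^i_{h_i}$ per colour: from $\card{S \cap P(i)} = d$ alone you cannot conclude $\card{S \cap Z(i)} = 1$, and the exchange $b_i \leftrightarrow z^i_{h_i}$ you allude to does not help when the degeneracy is of the opposite kind (two $z$-vertices for one colour, none for another). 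Concretely, take $k = 2$, $d = 2$, and $G = K_{p,p}$ for any $p \ge 2$; then $G$ has no multicolored independent set, yet
\[
    S \;=\; \{\,z^1_1,\; z^1_2\,\} \;\cup\; \{\,c^2,\; b_2\,\} \;\cup\; \{\,c^{12},\; q_{1,2}\,\}
\]
is a $(\{0,1\},\{2,3\})$-set of size $k' = 6$ in your $G'$. Indeed, every vertex of $S$ has exactly one $S$-neighbour; every $z^1_h$ with $h \ge 3$ and every $z^2_h$ has exactly two; $c^1, b_1$ have exactly two (namely $z^1_1, z^1_2$); and each $r^{(1,2)}_{s,t}$ has $1 + \card{\{z^1_1,z^1_2\}\setminus\{z^1_s\}} + 0 \in \{2,3\}$. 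No amount of padding of $p$ repairs this, because for $d = 2$ the ``two $z$-representatives for colour $1$'' already saturate the upper bound $d+1$ on the $\rho$-side. (For $d \ge 3$ the same trick needs $p \le d$, so it could be killed by padding, but you would still have to say so and argue it.)

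The paper's construction sidesteps exactly this failure mode, and it is worth seeing how. It attaches a gadget $\fC(i)$ \emph{only} to each colour $i$ (none at $R(i,j)$); the $d-1$ ``contact'' vertices of $\fC(i)$ are made adjacent both to $Z(i)$ \emph{and} to every $R(i,j)$ with $j > i$, so the same $d-1$ boosters serve all pairs involving $i$. The gadget is a full $K_{d,d}$ with an extra ``useless'' vertex $c^i_{1,d}$ and a second side $C_2(i)$ whose only neighbours are $C_1(i)$; together with a satellite $s_i$ (adjacent to $Z(i) \cup C_1(i)$) this forces any size-$k(d{+}1)$ solution to contain all of $C_1(i)$ and exactly \emph{one} further vertex from $Z(i) \cup \{s_i\}$ per colour, after which the usual swap gives a unique $z^i_{h_i}$. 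In short: the paper's accounting pins down $\card{D \cap Z(i)} \le 1$ by design, whereas your per-region bound only gives $\card{S \cap Z(i)} \le d$, which is too loose to recover an independent set.
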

\begin{proof}
    We modify the above reduction from {\sc Multicolored Independent Set}. Let $G$ be a graph with vertex partition $V_1, \ldots, V_k$ and $\card{V_i} = p$ for all $i \in [k]$ and assume $k \ge 2$. We first describe the gadget we use and then how to construct the graph $G'$ of the \prob ~instance.

\medskip    
\noindent{\bf The Gadget $\fC(i)$.} Let $i \in [k]$. The gadget $\fC(i)$ is a complete bipartite graph on bipartition $(C_1(i), C_2(i))$ with $C_1(i) \defeq \{c^i_{1, 1}, \ldots, c^i_{1, d}\}$ and $C_2(i) \defeq \{c^i_{2, 1}, \ldots, c^i_{2, d}\}$ such that each vertex $c^i_{1, j}$ for $j \in [d-1]$ is additionally adjacent to all vertices in $Z(i)$ as well as to all vertices in $R(i, j)$ for $j > i$.
(Note that $c^i_{1, d}$ does not have these additional adjacencies.) Throughout the following, we let $C(i) \defeq C_1(i) \cup C_2(i)$ and $C \defeq \bigcup_{i \in [k]} C(i)$.

    \begin{figure}
        \centering
        \includegraphics[height=.235\textheight]{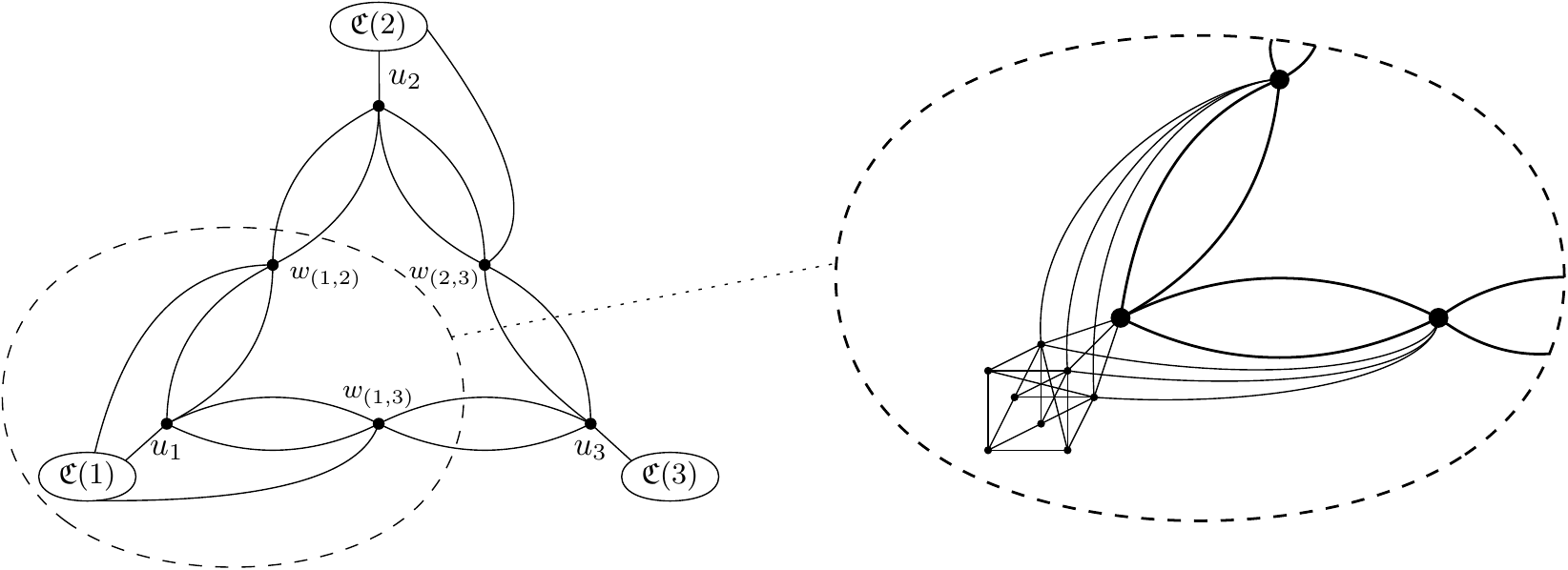}
        \caption{An example graph $K$ w.r.t.~which the graph $G'$ constructed in the proof of Theorem~\ref{thm:lb:min:d:dom} is a $K$-graph. In this example, $k = 3$.}
        \label{fig:k:min}
    \end{figure}
    The graph $G'$ is now obtained by constructing the graph $G''$ as in the proof of Theorem~\ref{thm:lb:max} and then, for each $i \in [k]$, adding the gadget $\fC(i)$ and adding a `satellite vertex' $s_i$, adjacent to all vertices in $Z(i) \cup C_1(i)$. $G'$ is a $K$-graph for the graph $K \supseteq H$, obtained by `hardcoding' each $\fC(i)$, for $i \in [k]$, into $H$. That is, for each $i \in [k]$, we add a complete bipartite graph with bipartition $(\{\gamma^i_{1, 1}, \ldots, \gamma^i_{1, d}\}, \{\gamma^i_{2, 1}, \ldots, \gamma^i_{2, d}\})$, and make all vertices $\gamma^i_{1, h}$, where $h \in [d-1]$, adjacent to $u_i$ as well as to all vertices $w_{(i, j)}$ with $j > i$. For an illustration of $K$ see Figure~\ref{fig:k:min}. Note that
    \begin{align}
        \cardd{K} = \cardd{H} + kd^2 + \sum_{i = 1}^k (k - i)(d-1) = \cO(k^2 \cdot d + k \cdot d^2) \label{eq:d:dom:num:edges:k}
    \end{align}

One can now argue that $G'$ is a $K$-graph. Since the construction is completely analogous to that explained in the proof of Theorem~\ref{thm:lb:max}, we skip the details here. We illustrate the structure of the graph $G'$ in Figure~\ref{fig:gprime:min}.
    \begin{figure}
        \centering
        \includegraphics[height=.265\textheight]{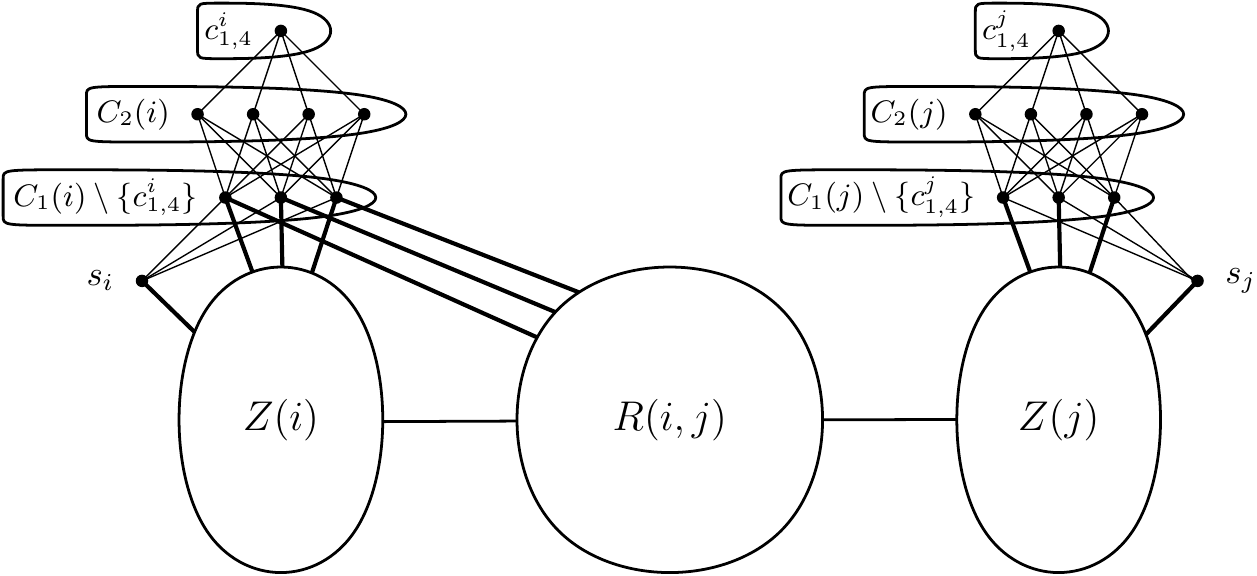}
        \caption{Illustration of a part of $G'$ constructed in the proof of Theorem~\ref{thm:lb:min:d:dom}, where $1 \le i < j \le k$ and $d = 4$.}
        \label{fig:gprime:min}
    \end{figure}
\begin{subclaim}\label{claim:d:dom:cor:forward}
    If $G$ has a multicolored independent set, then $G'$ has a \probset ~set of size $k' \defeq k \cdot (d+1)$.
\end{subclaim}
\begin{claimproof}
    Let $\{v^1_{h_1}, \ldots, v^k_{h_k}\}$ be the independent set in $G$. By the above reduction, $D' \defeq \{z^1_{h_1}$, $\ldots$, $z^k_{h_k}\}$ is a $(\{0\}, \{1, 2\})$-set of $G' - C$ (see also Remark~\ref{rem:ds:hardness}) of size $k$. Let $C_1 \defeq \bigcup_{i \in [k]} C_1(i)$, $C_2 \defeq C \setminus C_1$ and $D \defeq D' \cup C_1$. 
    
    Since each vertex in $V(G') \setminus (D \cup C)$ is adjacent to precisely $d-1$ vertices in $C_1$ and to either one or two vertices in $D'$ (and $D' \cap C_1 = \emptyset$), we can conclude that each vertex in $V(G') \setminus (D \cup C)$ is adjacent to either $d$ or $d+1$ vertices in $D$. Since each $C(i)$ induces a $K_{d, d}$, we can conclude that all vertices in $C_2$ have $d$ neighbors in $D$ as well. Furthermore, $N(s_i) \cap D = (C_1(i) \setminus \{c^i_{1, d}\}) \cup \{z^i_{h_i}\}$, so we have that all vertices in $G'$ that are not contained in $D$ have either $d$ or $d+1$ neighbors in $D$.
    
    Let $i \in [k]$. Then, $N(z^i_{h_i}) \cap D = \{c^i_{1, 1}$, $\ldots$, $c^i_{1, d-1}\}$,
    $N(c^i_{1, d}) \cap D = \emptyset$ and for $\ell \in [d-1]$, $N(c^i_{1, \ell}) \cap D = \{z^i_{h_i}\}$. We can conclude that $D$ is a $(\{0, 1, d-1\}, \{d, d+1\})$-set in $G'$ and clearly, $\card{D} = k + kd = k'$.
\end{claimproof}

In what follows, the strategy is to argue that each \probset ~set of size $k'$ contains a set $\{z^1_{h_1}, \ldots, z^k_{h_k}\}$ which will imply that $\{v^1_{h_1}, \ldots, v^k_{h_k}\}$ is an independent set in $G$.

\begin{subclaim}\label{claim:lb:d:dom:dplusone:z}
    For all $i \in [k]$, any \probset ~set $D$ in $G'$ contains at least $d$ vertices from $C(i)$ and at least $d+1$ vertices from $Z_+(i)$.
\end{subclaim} 
\begin{claimproof}
    We first show that each such $D$ contains at least $d$ vertices from $C(i)$. Suppose not, then $\card{D \cap C(i)} \le d-1$ for some $i \in [k]$. If $c^i_{1, d} \notin D$, then $C_2(i) \subseteq D$, otherwise $c^i_{1, d}$ cannot have $d$ or more neighbors in $D$. But $\card{C_2(i)} = d$, a contradiction. We can assume that $c^i_{1, d} \in D$. Furthermore, there is at least one vertex $c^i_{2, \ell}$ for $\ell \in [d]$ with $c^i_{2, \ell} \notin D$. To ensure that $c^i_{2, \ell}$ has at least $d$ neighbors in $D$, we would have to include all remaining vertices from $C_1(i)$ in $D$, but then $\card{D \cap C(i)} \ge d$, a contradiction. The claim now follows since the vertex $s_i$ only has neighbors in $Z_+(i)$ and at most $d-1$ neighbors in $D$ (namely $C_1(i) \setminus \{c^i_{1, d}\}$): Since $D$ is a \probset ~set, it either has to contain $s_i$ or at least one additional neighbor of $s_i$.
\end{claimproof}

\begin{subclaim}\label{claim:lb:d:dom:structure:ds}
    For all $i \in [k]$, any \probset ~set $D$ of size at most $k' = k(d+1)$ contains $C_1(i)$. We furthermore can assume that it additionally contains some $z^i_{h_i} \in Z(i)$, where $h_i \in [p]$.
\end{subclaim}
\begin{claimproof}
    By Claim~\ref{claim:lb:d:dom:dplusone:z} we have that $D$ contains $d+1$ vertices from each $Z_+(i')$, $i' \in [k]$, and no other vertices. Consider any vertex $z^i_s \in Z(i)$ (where $s \in [p]$) that is not contained in $D$. 
    Recall that $z^i_s$ has to have at least $d$ neighbors in $D$. By Clam~\ref{claim:lb:d:dom:dplusone:z}, $z^i_s$ has precisely one neighbor in $(Z(i) \cup \{s_i\}) \cap D$ and since $D$ does not contain any vertex from any $R(j, i)$ ($1 \le j < i$) or $R(i, j')$ ($i < j' \le k$), the only possible neighbors of $z^i_s$ in $D$ are $C_1(i) \setminus \{c^i_{1, d}\}$. We can conclude that $C_1(i) \subseteq D$. Now suppose that $s_i \in D$. Then, after swapping $s_i$ with any vertex in $Z(i)$, the resulting set remains a \probset ~set, and the claim follows.
\end{claimproof}

We are now ready to conclude the correctness proof of the reduction.

\begin{subclaim}\label{claim:d:dom:cor:backward}
    If $G'$ has a \probset ~set of size $k' = k(d+1)$, then $G$ has a multicolored independent set.
\end{subclaim}
\begin{claimproof}
    Let $D$ be a \probset ~set of size $k'$. By Claim~\ref{claim:lb:d:dom:structure:ds}, we can assume that $D = C_1 \cup \{z^1_{h_1}, \ldots, z^k_{h_k}\}$ for some $h_1, \ldots, h_k \in [p]$. Now, since for each $1 \le i < j \le k$, all vertices in $R(i ,j)$ have precisely $d - 1$ neighbors in $C_1$, each of them has to have at least one of $z^i_{h_i}$ and $z^j_{h_j}$ as a neighbor. By Observation~\ref{obs:claim:7}, this allows us to conclude that $\{v^1_{h_1}, \ldots, v^k_{h_k}\}$ is an independent set in $G$.
\end{claimproof}
    Claims~\ref{claim:d:dom:cor:forward} and~\ref{claim:d:dom:cor:backward} establish the correctness of the reduction. Clearly, $\card{V(G')} = \cO(\card{V(G)} + d^2 \cdot k)$ (and $G'$ can be constructed in polynomial time) and by (\ref{eq:d:dom:num:edges:k}), $\cardd{K} = \cO(k^2 \cdot d + k \cdot d^2)$. The theorem follows.
\end{proof}
Similarly to above, a combination of the previous two theorems with Proposition~\ref{thm:h:graph:mim} yields the following hardness results for $(\sigma, \rho)$ mimization problems on graphs of bounded linear mim-width.
\begin{corollary}
    Let $\sigma^* \subseteq \bN$ and $\rho^* \subseteq \bN$. Then, \prob ~is \Wone{}-hard parameterized by linear mim-width plus solution size, if one of the following holds.
    \begin{enumerate}[(i)]
        \item $\sigma^* \subseteq \bN^+$ with $1 \in \sigma^*$ and $\rho^* \subseteq \bN^+$ with $\{1, 2\} \subseteq \rho^*$.
        \item For some fixed $d \in \bN_{\ge 2}$, $\{0, 1, d-1\} \subseteq \sigma^*$ and $\rho^* \subseteq \bN_{\ge d}$ with $\{d, d+1\} \subseteq \rho^*$.
    \end{enumerate}
    Furthermore, the hardness holds even if a corresponding decomposition tree is given.
\end{corollary}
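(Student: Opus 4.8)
The plan is to derive this corollary as an immediate consequence of the two hardness theorems established in this section together with the bound on the linear mim-width of $H$-graphs from Proposition~\ref{thm:h:graph:mim}, exactly mirroring the corollary that follows Theorem~\ref{thm:lb:max} in the maximization case. Concretely, case~(i) of the statement corresponds to Theorem~\ref{cor:total:ds} and case~(ii) corresponds to Theorem~\ref{thm:lb:min:d:dom}: each of these theorems already shows that \prob{} is \Wone{}-hard on $H$-graphs when parameterized by $\cardd{H}$ plus solution size, and moreover the hardness holds even when an $H$-representation of the input graph is supplied.

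First I would recall that each of the two reductions takes an instance of {\sc Multicolored Independent Set} with parameter $k$ and produces an $H$-graph $G'$ (for an explicitly constructed graph $K \supseteq H$) together with a target solution size $k'$, where both $\cardd{K}$ and $k'$ are bounded by a function of $k$ alone once $d$ is regarded as a fixed constant; indeed $\cardd{K} = \bigO(k^2 d + k d^2)$ and $k' = k(d+1)$ in the $d$-domination case, and the analogous polynomial bounds hold in case~(i). Hence these are genuine parameterized reductions witnessing \Wone{}-hardness with respect to the combined parameter $\cardd{H}$ plus solution size.

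The key step is then to replace the parameter $\cardd{H}$ by the linear mim-width of $G'$. By Proposition~\ref{thm:h:graph:mim}, any $H$-graph has linear mim-width at most $2\cdot\cardd{H}$, so on the instance $G'$ produced by either reduction the linear mim-width is at most $2\cardd{K}$, which is again bounded by a function of $k$. Consequently the parameter ``linear mim-width plus solution size'' of the output instance is bounded by a function of the source parameter $k$, so the same reductions are valid parameterized reductions for the new parameterization, and \Wone{}-hardness transfers verbatim.

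Finally, to obtain the strengthening that the hardness holds even when a decomposition tree is provided, I would invoke the second half of Proposition~\ref{thm:h:graph:mim}: a decomposition tree realizing linear mim-width at most $2\cdot\cardd{H}$ can be computed in polynomial time from an $H$-representation. Since each reduction already outputs such a representation of $G'$, we simply append this polynomial-time computation to the reduction and hand the resulting decomposition tree to the target problem as part of the instance. There is essentially no real obstacle here beyond bookkeeping; the only point that requires (routine) care is verifying that, with $d$ fixed, both $\cardd{K}$ and the target size $k'$ are functions of $k$ alone, which is exactly what guarantees that the combined parameter stays bounded and the reductions remain parameterized reductions.
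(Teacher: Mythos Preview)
Your proposal is correct and follows exactly the approach the paper takes: the paper simply states that the corollary follows by combining Theorems~\ref{cor:total:ds} and~\ref{thm:lb:min:d:dom} with Proposition~\ref{thm:h:graph:mim}, and your write-up fleshes out precisely this combination. If anything, your version is more explicit than the paper's one-sentence justification, but the argument is identical.
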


\section{Concluding Remarks}\label{sec:conclusion}
We have introduced the class of distance-$r$ $(\sigma, \rho)$ and \LCVPshort{} problems. This generalizes well-known graph distance problems
like distance-$r$ domination, distance-$r$ independence, distance-$r$ coloring and perfect $r$-codes. It also introduces many new  distance problems for which the standard distance-$1$ version naturally captures a well-known graph property. 

Using the graph parameter mim-width, we showed that all these problems are solvable in polynomial time for many interesting graph classes. These meta-algorithms will have runtimes which can likely be improved significantly for a particular problem on a particular graph class. 
For instance, blindly applying our results to solve {\sc Distance-$r$ Dominating Set} on permutation graphs yields an algorithm that runs in time $\cO(n^8)$: Permutation graphs have linear mim-width $1$ (with a corresponding decomposition tree that can be computed in linear time) \cite[Lemmas 2 and 5]{BELMONTE201354}, so we can apply Corollary~\ref{cor:dist:r:mim}(\ref{cor:dist:r:mim:lin}).
However, there is an algorithm that solves {\sc Distance-$r$ Dominating Set} on permutation graphs in time $\cO(n^2)$~\cite{RPP16}; a much faster runtime.
A concrete example of improving a mim-width based algorithm on a specific graph class has recently been provided by Chiarelli et al.~\cite{CHL18} who gave algorithms for the {\sc (Total) $k$-Dominating Set} problems that run in time $\cO(n^{3k})$ on proper interval graphs.
The fastest previously known algorithm runs in time $\cO(n^{4 + 6k})$~\cite{BELMONTE201354,BUIXUAN201366}, the generic mim-width based algorithm (cf.~Proposition~\ref{prop:sigma:rho-algorithm-nondistance-minmw} and~\cite[Lemmas 2 and 3]{BELMONTE201354})

We would like to draw attention to the most important and previously stated~\cite{JKT18,DBLP:journals/tcs/SaetherV16,vatshelle2012new} open question regarding the mim-width parameter:
Is there an \XP{} approximation algorithm for computing mim-width? An important first step could be to devise a polynomial-time algorithm deciding if a graph has mim-width 1, or even linear mim-width 1.

Regarding lower bounds, we expanded on the previous results by Fomin et al.~\cite{FGR17} and showed that many $(\sigma, \rho)$ problems are \Wone{}-hard parameterized by mim-width. However, it remains open whether there exists a problem which is \NP{}-hard in general, yet \FPT{} by mim-width. In particular, there are currently no hardness results when $\sigma$ and $\rho$ are both finite. Even so, we conjecture that every \NP{}-hard (distance) $(\sigma, \rho)$ problem is \Wone{}-hard parameterized by mim-width.

\bibliographystyle{plain}
\bibliography{ref}

\newpage

\appendix
\section{Basic definitions and notation}
We let the set of natural numbers be $\bN = \{0, 1, 2, \ldots\}$, and the positive natural numbers be $\bN^+ = \bN \setminus \{0\}$. For a set $S$ and a given property $\psi$, we denote by $S_\psi$ the biggest subset of $S$ where $\psi$ is satisfied for all elements. For instance, $\bN^+_{\leq k}$ denotes the set $\{1, 2, \ldots k\}$. For this particular property, we also use the shorthand $[k] = \bN^+_{\leq k}$.

A set $A \subseteq \bN$ is finite if it has finite cardinality, and it is co-finite if $\bN \setminus A$ has finite cardinality.


\medskip
\noindent{\bf Graphs.}
For a graph $G$ and a vertex $u \in V(G)$, its \emph{neighborhood} $N(u)$ is the set of all vertices adjacent to $u$. The \emph{closed neighborhood} of $u$ is denoted $N[u] = N(u) \cup \{v\}$.
The \emph{degree} of a vertex is the number of vertices adjacent to $v$ in the graph, $deg(v) = |N(v)|$. A vertex of degree $1$ is called a leaf, and the set of all leaves of a graph $G$ is denoted $L(G)$. 
In cases where it would otherwise be unclear which graph is being referred to, a subscript is added, \eg{} $N_{G'}(u)$ denotes the neighborhood of $u$ in the graph $G'$.

For two vertices $u,v \in V(G)$, the \emph{distance} between them $\dist(u, v)$ is the shortest possible length of a path with $u$ and $v$ as its endpoints, or $\infty$ if no path exist. A graph is \emph{connected} if $\dist(u,v) < \infty$ for all vertices $u, v \in V(G)$. For a positive integer $r$, the \emph{$r$-neighbourhood} of a vertex $u$ is the set of vertices at distance $r$ or less from $u$, denoted $N^r(u) = \{v \in V(G) \setminus \{u\} \mid \dist(u, v) \leq r\}$.

A \emph{connected component} is a vertex maximal induced subgraph which is connected. A \emph{tree} $T$ is a connected graph which contains no cycles. A \emph{caterpillar} is a graph which consists of a path, and for each non-endpoint vertex of the path there is an additional leaf attached to that vertex.

We denote by $n = |G| = |V(G)|$ the number of vertices, and by $m = ||G|| = |E(G)|$ the number of edges of a graph $G$.

\medskip
\noindent{\bf Parameterized Complexity Theory.}
A {\em parameterized problem} is a problem where the input instances come along with an non-negative integer $k$, the {\em parameter}. Formally, it is a language $L \subseteq \Sigma^* \times \bN$, where $\Sigma$ is a fixed, finite alphabet. The parameter $k$ is sometimes given implicitly.

A parameterized problem is in the class FPT if there exsists an algorihtm which correctly decides every instance $(x, k)$ of the problem in time $f(k) \cdot |(x, k)|^c$ for some constant $c$. Such an algorithm is called and FPT algorithm.

A parameterized problem is in the class \XP{} if for every instance $(x, k)$ there exsists an algorithm which solves it in time $f(k) \cdot |(x, k)|^{g(k)}$.

For both FPT and XP algorithms, the runtime becomes polynomial when the parameter is bounded. But while it is clear that $\classFPT \subseteq \XP$, the converse is not true under basic complexity assumptions, and there is a hierarchy of complexity classes between them called the W hieararchy: $\classFPT \subseteq \Wone \subseteq \W[2] \ldots \subseteq \XP$. We say that a problem is $\Wone$-hard if every problem in $\Wone$ can be reduced to that problem by a parameter-preserving reduction. For a more thorough introduction, we refer the reader to~\cite{cygan2015parameterized}.


\end{document}